\documentclass[runningheads]{llncs}

\usepackage[utf8]{inputenc}
\usepackage{amsmath}

\usepackage{amsthm}
\usepackage{stmaryrd}
\usepackage{bbold}
\usepackage{graphicx}
\usepackage{subcaption}
\usepackage[linesnumbered,ruled,vlined,noend]{algorithm2e}
\usepackage{thmtools}
\usepackage{thm-restate}
\usepackage{url}
\usepackage{enumitem}
\usepackage[T1]{fontenc}

\def\N{\mathbb{N}}
\def\P{\mathbb{P}}

\def\E{\mathbb{E}}
\def\B{\mathcal{B}}

\def\O{\mathcal{O}}

\def\1{\mathbb{1}}

\def\Poi{\operatorname{Poi}}
\def\Ber{\operatorname{Ber}}
\def\Var{\operatorname{Var}}
\def\AE{\operatorname{AE}}

\newcommand{\ceil}[1]{\left\lceil#1\right\rceil}
\newcommand{\floor}[1]{\left\lfloor#1\right\rfloor}

\newcommand{\ocinterval}[1]{\left(#1\right]}
\newcommand{\cointerval}[1]{\left[#1\right)}

\usepackage{xcolor}

\begin{document}


\title{A Poisson-Based Approximation Algorithm for Stochastic Bin Packing of Bernoulli Items}
\titlerunning{A Poisson-Based Approximation Algorithm for Stochastic Bin Packing}

\author{Tomasz Kanas\orcidID{0000-0002-0715-4622} \and
  Krzysztof Rzadca\orcidID{0000-0002-4176-853X}}

\institute{Institute of Informatics, University of Warsaw, Poland\\
  \email{\{t.kanas,krzadca\}@mimuw.edu.pl}}

\maketitle

\begin{abstract}
  A cloud scheduler packs tasks onto machines with contradictory goals of (1) using the machines as efficiently as possible while (2) avoiding overloading that might
  result in CPU throttling or out-of-memory errors.
  We take a stochastic approach that models the uncertainty of tasks' resource requirements by random variables.
  We focus on a little-explored case of items, each having a Bernoulli distribution that corresponds to tasks that are either idle or need a certain CPU share.
  RPAP, our online approximation algorithm, upper-bounds a subset of items by Poisson distributions.
  Unlike existing algorithms for Bernoulli items that prove the approximation ratio only up to a multiplicative constant, we provide a closed-form expression.
  We derive RPAPC, a combined approach having the same theoretical guarantees as RPAP.
  In simulations, RPAPC's results are close to FFR, a greedy heuristic with no worst-case guarantees; RPAPC slightly outperforms FFR on datasets with small items.
  

\end{abstract}

\keywords{
cloud scheduling, stochastic bin packing, stochastic optimization, approximation algorithms
}

\section{Introduction}

Modern virtualization technologies --- virtual machines (VMs) and Linux containers --- allow execution in parallel of dozens of independent tasks on a single physical machine. 
Given the planet-wide scale~\cite{barroso2013datacenter} of the largest public (AWS, Azure, GCP) and private (e.g. Google) clouds, 
even small improvements in resource utilization slow the growth rate of the hardware fleets and thus save equipment and electricity~\cite{barroso2013datacenter,Autopilot,bashir2021peaklimit}.

Bin Packing (BP)~\cite{johnson1973near} is 
perhaps the most fundamental model of datacenter allocation~\cite{verma2015large,Tirmazi_2020,pietri2016mapping}.
In BP, the goal is to pack the given items into as few equally-sized bins as only possible, without exceeding the capacity of any bin.
In cloud computing, 
bins correspond to machines, items to 
\emph{tasks} (VMs or containers) to allocate and items'
sizes --- to CPU or memory requirements.

However, there is a fundamental difference between packing boxes onto a truck and Linux containers onto a machine. 
Boxes' sizes are easy to measure and, barring extreme events, unchanging. In contrast, the resource requirements of a task are more difficult to estimate. 
Tasks are commonly packed by \emph{limits}~\cite{verma2015large,Tirmazi_2020,burns2016borg}: essentially, the to-be-scheduled task declares (sometimes through automation~\cite{Autopilot}) to the scheduler an upper bound on the resources it might request. 
Yet, packing by limits is fundamentally inefficient~\cite{bashir2021peaklimit}. 
Even if limits were clairvoyant (set to each task's exact maximal usage), 
using limits, 
the scheduler effectively assumes that every task will always consume exactly its maximal usage --- which is rarely the case~\cite{Janus_2017}. Even with overcommit~\cite{bashir2021peaklimit}, utilization remains low~\cite{lu2017imbalance}.




Stochastic Bin Packing (SBP)~\cite{Goel_1999,bursty} 
models the uncertainty of tasks' resource requirements by using random variables as items' sizes.
Accordingly, the constraint of never overpacking any bin is generalized to a probabilistic one --- an upper bound $\alpha$ on the probability that each bin's capacity is exceeded.
SBP can represent the cloud allocation problem~\cite{Janus_2017,Breitgand_2012,Cohen_2017}:\@ the random variables map to observed or estimated tasks' resource usage; and $\alpha$ maps to a probabilistic Service Level Objective, SLO\@.
Notably,~\cite{bashir2021peaklimit} combines declared limits (for new tasks) with estimations of a machine's predicted total usage (for long-running ones); a prototype improved efficiency by 2\% on 11,000 production machines in the internal Google cloud.
While SBP models have limitations (e.g.: not explicitly modeling variability over time~\cite{Luo_2013}, dynamic arrivals and departures~\cite{coffman1983dynamic}, or correlations between tasks~\cite{beaumont2016analyzing}),
we claim that solving a more general problem usually requires at some point solving its more fundamental version.

Perhaps the most restrictive assumption we take is that all the items follow scaled Bernoulli distributions. 
Such items correspond to tasks that for some fraction of time compute with (approximately) constant intensity, and then idle e.g.\ waiting for the next request. 
We claim that Bernoulli items are a reasonable model: e.g., in the Google Cluster Trace~\cite{Wilkes2020a}, \cite{Janus_2017}~shows a large task group with CPU requirements resembling the scaled Bernoulli distribution. 
One can argue that if there were enough tasks in one bin, then, from the Central Limit Theorem, the cumulative distribution of that bin would be close to normal.
However, if the tasks are large, few of them fit into a machine, which makes the normal distribution inadequate~\cite{Janus_2017}.
Additionally, solving the special case of Bernoulli items could bring us closer to a distributionally-robust solution.
From the theoretical perspective, Bernoulli items seem to pose more difficulties than other distributions like Poisson~\cite{Goel_1999} or Gaussian~\cite{Breitgand_2012,Cohen_2017} (Section~\ref{sec:related-work}).

\noindent\textbf{The contribution of this paper is the following:}
\begin{itemize}[nosep,noitemsep,topsep=0pt,parsep=0pt,partopsep=0pt,leftmargin=*]
  \item We design Refined Poisson Approximation Packing (RPAP), an online algorithm that finds a viable packing of Bernoulli variables to bins while keeping the overload probability of any bin below $\alpha$. 
  Our algorithm is easy to implement and schedules one item in $\O(\log n)$ time (Section~\ref{sec:rpap}).
  \item We prove a closed-form formula of the RPAP approximation ratio, which depends only on the (given) overload probability $\alpha$ (Section~\ref{sec:approximation}, Eq.~\ref{eq:approx_constant_general}).
  \item 
  In simulations, we compare RPAP with \cite{bursty} and FFR, a heuristic with no worst-case guarantees.
  We propose RPAPC that combines RPAP with a heuristic, maintaining RPAP's guarantees. Our approaches outperform \cite{bursty} and are close to FFR; slightly improving upon FFR on datasets with small items.
\end{itemize}
To the best of our knowledge, our paper shows the first proof of a closed-form formula for the approximation ratio of an algorithm for Stochastic Bin Packing with Bernoulli items (\cite{bursty} shows only asymptotics) and the first experimental evaluation of SBP algorithms on Bernoulli items.

%
\section{Related work}\label{sec:related-work}
We focus below on theoretical approaches to stochastic bin packing.
SBP is a stochastic extension of a classic combinatorial optimization problem, an approach called stochastic optimization~\cite{Goel_1999}.
Works on SBP usually assume that all items' sizes follow a known distribution.
When items have \emph{normal} distribution,
Breitgand and Epstein~\cite{Breitgand_2012} show a $(2+\epsilon)$-approximation algorithm, and an offline 2-approximation; Cohen et al.~\cite{Cohen_2017} show that First Fit is 9/4-approximation;
Martinovic and Selch~\cite{Martinovic_2021}
show improved lower bounds and discuss linearization techniques; Yan et al.~\cite{Yan_2022} propose a new metric of bin load, develop algorithms and perform experiments on synthetic and real data.
Other item distributions are also considered, for example, Goel and Indyk~\cite{Goel_1999} propose a PTAS for \emph{Poisson} and \emph{exponential} items.

The \emph{Bernoulli} distribution seems to be more difficult to work with. For a bin, computing the overflow probability is $\O(n)$ for Poisson and Gaussian distributions; yet it is \#P hard for Bernoulli~\cite{bursty}, i.e. as hard as counting the number of solutions of an NP-complete problem (which is hypothesized to be harder than finding any solution).
Furthermore, a standard approach to stochastic bin packing is to calculate each stochastic item's \emph{effective size}, which is then used by a deterministic packing algorithm.
For Poisson and normal items, one can find an effective size that gives an $\O(1)$-approximation algorithm~\cite{Chen_2011}.
However, for Bernoulli items, any effective size-based algorithm has an $\Omega(\alpha^{-1/2})$ approximation ratio, where $\alpha$ is the maximal overflow probability~\cite{bursty}.
\cite{Goel_1999} shows a QPTAS for \emph{Bernoulli} items. 
\cite{bursty} shows an $\O\left(\sqrt{\frac{\log \alpha^{-1}}{\log\log \alpha^{-1}}}\right)$-approximation and $\O(\epsilon^{-1})$-approximation for an $\epsilon$-relaxed problem. 

\noindent\textbf{Our approach compared to Kleinberg et al.~\cite{bursty}:} Like~\cite{bursty}, our algorithm also splits items into subgroups and similarly packs the small items (Section~\ref{sec:correctness-confident}).
In contrast, for the most complex case of the standard items we use a Poisson approximation (Section~\ref{sec:correctness-standard}), while they use effective bandwidth and probabilistic inequalities.
Moreover, there is only asymptotic analysis of the approximation factor in~\cite{bursty}, which allows them to hide in the $\O$ notation the multiplicative constant arising from splitting items into subgroups.
We managed to avoid such multiplicative constant by proving the upper bound of the expected value of any correctly packed bin (Lemma~\ref{upper_mean}).
Moreover, to bound the approximation constant, we proved a technical inequality on the inverse of Poisson CDF (Lemma~\ref{lem:inv_gamma_ineq}). In contrast, \cite {bursty} used results on antichains to optimize the asymptotic approximation factor of their algorithm.

\section{Problem formulation and notation}
We are given a sequence of items $X_{1},\ldots, X_{n}$ and an infinite sequence of identical bins of capacity 1. 
The goal is to find a \emph{viable} assignment of items to bins that uses the minimal number of bins. 
We assume that all items are random variables that follow \emph{scaled Bernoulli} distributions. As in~\cite{Goel_1999,bursty}, we assume that random variables are independent. Our problem is thus clairvoyant, as we receive full information about an item on submission, although the sizes remain stochastic, in contrast to an alternative model in which a size is drawn from a certain distribution and then does not change.

We denote the Bernoulli distribution by $\Ber(p)$ and the Poisson distribution by $\Poi(\lambda)$. We define \emph{scaled} Bernoulli $\Ber(p, s)$ and Poisson $\Poi(\lambda, s)$ distributions, where $s > 0$ is the \emph{size}: $sX$ is scaled-Bernoulli distributed ($sX \sim \Ber(p,s)$) when $X \sim \Ber(p)$ (Poisson is defined analogically). For example, if an item $X_{i} \sim Ber(p, \frac{1}{3})$ then the item's size is equal to $\frac{1}{3}$ with probability $p$ and $0$ with probability $1-p$.

For a random variable $X$, $F_{X}$ denotes its cumulative distribution function (CDF), $F_{X}(t) = \P(X \le t)$.
We denote $Q$ as the Poisson CDF: $\P(\Poi(\lambda) \le x) = Q(x, \lambda)$; and $Q^{-1}(x,\gamma)$ as its inverse with respect to the second argument.

We assume that items $X_{i} \sim \Ber(p_{i},s_{i})$, where $p_{i} \in \ocinterval{0, 1}$, and $s_{i} \le s_{\max}$. $s_{\max} \in \ocinterval{0,1}$ is an additional parameter that increases the versatility of our results.
In the general case, $s_{\max} = 1$ (an item always fits in a single bin).
We denote the set of items in $j$-th bin by $\B_{j}$, their sum by $B_{j}$ and by $\alpha > 0$ the \emph{maximal overflow probability}. An assignment is \emph{viable} if for every bin $j$ the probability of exceeding the bin's capacity is at most $\alpha$, $\P(\sum_{i \in \B_{j}} X_{i} > 1) \le \alpha$.

We argue that $\alpha$ should be treated as a constant in the context of the data center allocation,
where $\alpha$ corresponds to the service level objective (SLO) negotiated between the provider and their clients. 
As only very rarely can the machine be overloaded,
usually, there are only a few groups of items with fixed and small SLO values (e.g., $0.01, 0.005, 0.001$). We thus also assume that $0 < \alpha \le \frac{1}{2}$.

We call a BP algorithm Any-Fit if it does not open a new bin if the current item fits in any already opened bin~\cite{CoffmanJr_2013} (e.g. First Fit or Best Fit). We use Any-Fit algorithms as a building block for RPAP, but RPAP \emph{is not} Any-Fit.

\section{Refined Poisson Approximation Packing Algorithm}\label{sec:rpap}

Refined Poisson Approximation Packing (RPAP, Algorithm~\ref{algo}), separates items into three disjoint groups. 
Each group is packed separately into a disjoint set of bins.
We reduce the packing of each group to BP and pack with an Any-Fit algorithm.
In this section, we describe the algorithm; the following Section~\ref{sec:correctness} proves the viability of the allocation; and Section~\ref{sec:approximation} proves the approximation ratio.

To separate items into three groups, we introduce two additional parameters: $p_{\max} \in (0,1)$, $s_{\min} \in (0, s_{\max})$
(we show in Section~\ref{sec:approximation-opt} how to choose the values that minimize the approximation ratio).
The groups are defined as follows:

\begin{itemize}[nosep,noitemsep,topsep=0pt,parsep=0pt,partopsep=0pt]
  \item \emph{Confident} items have non-zero load with high probability: $p > p_{\max}$.
  \item \emph{Minor} items are small: $s \le s_{\min}$, $p \le p_{\max}$.
  \item \emph{Standard} items are the remaining items: $s_{\min} < s \le s_{\max}$, $p \le p_{\max}$.
\end{itemize}

The algorithm proceeds as follows. 
Confident items have large probabilities, so we round their probabilities up to 1 and pack them by their sizes (line~\ref{algo_confident}).
Minor items are small, so they have small variances because the variance of $X \sim \Ber(p, s)$ is $s^{2}p(1-p)$. Intuitively it means that with high probability small items are close to their mean.
Thus, we pack them (line~\ref{algo_minor}) by their means scaled by some factor $\mu_{0}$ (defined in line~\ref{algo_mu_0}).

The core idea of our algorithm is to approximate the remaining, \emph{standard}, items by Poisson variables. 
The problem of packing Poisson variables turns out to be equivalent to BP. We later prove that we can upper bound a $\Ber(p)$ variable by a $\Poi(\log(1/(1 - p)))$ variable.
As the items are \emph{scaled} Bernoulli variables, we also use scaled Poisson variables, but to reduce the problem to BP, we need these sizes to be equal.
Thus, we additionally group standard items into subgroups with similar sizes and round their sizes up to the upper bound of such subgroup (line~\ref{algo_standard}: $k$ is the subgroup and $\lambda_k$ scales all items' sizes in that group).

\begin{algorithm}[tb]
  \SetDataSty{text}
  \SetFuncSty{text}
  \SetAlgoLined{}
  \SetKwData{ConfidentBins}{ConfidentBins}
  \SetKwData{MinorBins}{MinorBins}
  \SetKwData{StandardBins}{StandardBins}
  \SetKwData{EmptyPacking}{EmptyPacking}
  \SetKwData{Packing}{Packing}
  \SetKwFunction{PackAnyFit}{PackAnyFit}
  \SetKwFunction{Concat}{Concat}
  \SetKwInOut{Data}{Data}
  \SetKwInOut{Using}{Using}
  \SetKwInOut{Return}{return}
  \SetKwFor{Foreach}{for each}{:}{}
  \Using{\PackAnyFit{id, size} method that packs item id with an Any-Fit algorithm to a bin of size 1.}
  \ConfidentBins$:=$ \EmptyPacking\;
  \MinorBins$:=$ \EmptyPacking\;
  $k_{\min} := \floor{1/s_{\max}}$\;\label{algo:k_min}
  $k_{\max} := \ceil{1/s_{\min}} - 1$\;\label{algo:k_max}
  \Foreach{$k \in \{k_{\min},\ldots,k_{\max}\}$}{
    $\lambda_{k} := Q^{-1}(k + 1, 1 - \alpha)$\;\label{algo:lambda_k}
    $\StandardBins[k]:=$ \EmptyPacking\;
  }
  $\mu_0 := (2\alpha + s_{\min} - \sqrt{s_{\min}^{2} + 4\alpha s_{\min}}) / 2 \alpha$\;\label{algo_mu_0}
  \Foreach{item i}{
      \uIf{$p_{i} > p_{\max}$}{
        $\ConfidentBins.\PackAnyFit(i,s_{i})$\;\label{algo_confident}
      } \uElseIf{$s_{i} \le s_{\min}$}{
        $\MinorBins.\PackAnyFit(i, p_{i}s_{i} / \mu_0)$\;\label{algo_minor}
      } \Else{
          $k := \floor{1/s_{i}}$\;\label{algo_standard_1}
          $\StandardBins[k].\PackAnyFit(i, \log(1/(1 - p_{i})) / \lambda_{k})$\;\label{algo_standard}
      }
  }
  \mbox{\Return{(\ConfidentBins, \MinorBins, $\StandardBins[k_{\min}], \dots, \StandardBins[k_{\max}]$)}}
  \caption{Refined Poisson Approximation Packing (RPAP)}\label{algo}
\end{algorithm}

\section{Proof of Correctness}\label{sec:correctness}
As RPAP packs the three groups into three disjoint sets of bins, we prove the correctness of the allocation case by case: confident and minor items in Section~\ref{sec:correctness-confident}; and standard items in Section~\ref{sec:correctness-standard}.

\subsection{Confident and minor items}\label{sec:correctness-confident}
\begin{lemma}
  The packing of confident and minor items is viable.
\end{lemma}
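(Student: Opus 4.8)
The plan is to treat the two groups separately, since RPAP packs them into disjoint families of bins, and to show that every single bin $\B_j$ of each type satisfies the viability condition $\P(B_j > 1) \le \alpha$.

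For the confident items the argument is deterministic domination and requires no probabilistic estimate. These items are handed to the Any-Fit routine with their \emph{true} sizes $s_i$ (line~\ref{algo_confident}), so every confident bin satisfies $\sum_{i \in \B_j} s_i \le 1$. Because each $X_i \sim \Ber(p_i, s_i)$ is supported on $\{0, s_i\}$, we have $X_i \le s_i$ almost surely, hence $B_j = \sum_{i \in \B_j} X_i \le \sum_{i \in \B_j} s_i \le 1$ with probability one, giving $\P(B_j > 1) = 0 \le \alpha$.

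For the minor items I would use a second-moment (Chebyshev) argument calibrated exactly to the definition of $\mu_0$. Here the Any-Fit routine packs with effective sizes $p_i s_i / \mu_0$ (line~\ref{algo_minor}), so each minor bin satisfies $\mu := \E[B_j] = \sum_{i \in \B_j} p_i s_i \le \mu_0$. Using $\Var(X_i) = s_i^2 p_i (1 - p_i)$, the bound $s_i \le s_{\min}$, and independence (which makes variances additive), I would bound the bin's variance by $\sigma^2 := \Var(B_j) = \sum_{i \in \B_j} s_i^2 p_i (1 - p_i) \le s_{\min} \sum_{i \in \B_j} s_i p_i = s_{\min}\mu$. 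I would first check from its closed form that $\mu_0 \in (0,1)$ — the inequality $\mu_0 < 1$ reduces to $0 < 4\alpha s_{\min}$ — so that $1 - \mu > 0$. Since $\{B_j > 1\} \subseteq \{\,|B_j - \mu| > 1 - \mu\,\}$, Chebyshev's inequality then gives
\[
  \P(B_j > 1) \le \frac{\sigma^2}{(1-\mu)^2} \le \frac{s_{\min}\mu}{(1-\mu)^2}.
\]
The map $\mu \mapsto s_{\min}\mu/(1-\mu)^2$ is increasing on $[0,1)$ (its derivative equals $s_{\min}(1+\mu)(1-\mu)^{-3} > 0$), so it attains its maximum over the feasible range at $\mu = \mu_0$.

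The one point that needs care is matching the constant: I would verify that $\mu_0$ from line~\ref{algo_mu_0} is precisely the smaller root of $\alpha \mu^2 - (2\alpha + s_{\min})\mu + \alpha = 0$, which is exactly the equation $s_{\min}\mu/(1-\mu)^2 = \alpha$. Consequently the Chebyshev bound equals $\alpha$ at the worst case $\mu = \mu_0$ and $\P(B_j > 1) \le \alpha$ for every minor bin. The main obstacle is recognizing that $\mu_0$ is engineered for plain Chebyshev together with the tight variance estimate $\sigma^2 = s_{\min}\mu$; a sharper one-sided bound such as Cantelli's would yield a different threshold and would not coincide with the algorithm's $\mu_0$. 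Combining the two cases establishes that the packing of confident and minor items is viable.
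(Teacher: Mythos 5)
Your proposal is correct and follows essentially the same route as the paper: deterministic domination for confident bins, and Chebyshev's inequality with the variance bound $\Var(B_j)\le s_{\min}\E(B_j)$ for minor bins, using that $\mu_0$ solves $s_{\min}\mu/(1-\mu)^2=\alpha$. Your added checks (monotonicity of the bound in $\mu$ and $\mu_0<1$) are details the paper leaves implicit but are consistent with its argument.
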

\begin{proof}
Confident items are packed by their sizes $s_{i}$, so the sum of sizes in any bin $\B$ is $\sum_{i} s_{i} \le 1$, and the probability of overflow is $\P(B > 1) = 0 < \alpha$.

For minor items we have $\forall_{i} s_{i} \le s_{\min}$ and we are packing them by their expected value $s_{i}p_{i}$, so if $\E(B) = \sum_{i}s_{i}p_{i} \le \mu_{0} < 1$, then from Chebyshev inequality:
\[\P(B > 1) \le \frac{\Var(B)}{{(1 - \E(B))}^{2}} = \frac{\sum_{i}s_{i}^{2}p_{i}(1-p_{i})}{{(1 - \E(B))}^{2}} < \frac{s_{\min}\sum_{i}s_{i}p_{i}}{{(1 - \E(B))}^{2}} \le s_{\min}\frac{\mu_{0}}{{(1 - \mu_{0})}^{2}}.\]

The viability of the packing follows from $\mu_0$ (Algorithm~\ref{algo}, line~\ref{algo_mu_0})
being a solution of the equation:
$\frac{{(1 - \mu_{0})}^{2}}{\mu_{0}} = \frac{s_{\min}}{\alpha}.$
\end{proof}

\subsection{Standard items}\label{sec:correctness-standard}

To pack standard items, we upper-bound the probability of overflow by the tail of the Poisson distribution. First, we separate items into subgroups, such that the $k$-th group consists of items whose sizes are in the interval $s_{i} \in \ocinterval{\frac{1}{k+1}, \frac{1}{k}},\quad k \in \{k_{\min}, \ldots,  k_{\max}\}$ (Algorithm~\ref{algo}, lines~\ref{algo:k_min}-\ref{algo:k_max}).
Inside a single subgroup, we round items' sizes up to the upper bound of the interval. Every subgroup is packed into a separate set of bins.

The proof uses the following two lemmas (all proofs are in the appendix~\cite{paper-appendix}):
\begin{restatable}{lemma}{berpoilemma}\label{lem_bern_poiss}
  \mbox{If $X \sim \Ber(p)$, $Y \sim \Poi(\lambda)$, and $\lambda \ge \ln\left(\frac{1}{1-p}\right)$, then
  $\forall_{t} F_{X}(t) \ge F_{Y}(t) \text{.}$}
\end{restatable}

\begin{restatable}{lemma}{sumtwolemma}\label{lem_cdf_sum}
  If $X_{1},X_{2},Y_{1},Y_{2}$ are discrete independent random variables with \linebreak countable support and ${\forall_{t} F_{X_{i}}(t) \ge F_{Y_{i}}(t)}$ for $i \in \{1,2\}$, then
  $\forall_{t} F_{X_{1} + X_{2}}(t) \ge F_{Y_{1} + Y_{2}}(t)$.
\end{restatable}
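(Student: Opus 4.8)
The statement asserts that the usual stochastic order (encoded by $F_{X_i}(t) \ge F_{Y_i}(t)$, i.e.\ $X_i$ is stochastically no larger than $Y_i$) is preserved under convolution of independent variables. The plan is to reduce the two-variable replacement to two applications of a single-variable replacement, and then to prove the latter by a direct conditioning argument.

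First I would establish the auxiliary claim: if $A, B, C$ are discrete independent variables with countable support, $C$ independent of both $A$ and $B$, and $F_A(t) \ge F_B(t)$ for all $t$, then $F_{A+C}(t) \ge F_{B+C}(t)$ for all $t$. The proof conditions on the value of $C$. Since $C$ has countable support, I can write $\{A+C \le t\}$ as the disjoint union over $c$ in the support of $C$ of $\{C = c\} \cap \{A \le t-c\}$, and independence gives the convolution formula
\[
F_{A+C}(t) = \sum_c \P(C=c)\,F_A(t-c).
\]
The same formula holds for $B+C$. Applying the hypothesis $F_A(t-c) \ge F_B(t-c)$ at each point $t-c$ and using $\P(C=c) \ge 0$, I compare the two series termwise to conclude $F_{A+C}(t) \ge F_{B+C}(t)$.

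With the auxiliary claim in hand, the lemma follows by a two-step hybrid argument. Taking $(A,B,C) = (X_1, Y_1, X_2)$ gives $F_{X_1+X_2}(t) \ge F_{Y_1+X_2}(t)$, and taking $(A,B,C) = (X_2, Y_2, Y_1)$ gives $F_{Y_1+X_2}(t) \ge F_{Y_1+Y_2}(t)$; chaining the two inequalities yields the result. Both steps are legitimate because the four variables are mutually independent, so $X_2$ is independent of the pair $(X_1, Y_1)$ and $Y_1$ is independent of the pair $(X_2, Y_2)$, which is exactly what the auxiliary claim requires of $C$.

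I do not expect a serious obstacle here: the only point requiring care is the justification of the convolution formula over countable support, which is where the discreteness and independence hypotheses enter, and the termwise comparison of the resulting series, which is valid because all summands are nonnegative so there is no issue of rearrangement or convergence. An alternative, equally short route is Strassen's coupling theorem: realize each pair on a common space with $\tilde X_i \le \tilde Y_i$ almost surely, take the two couplings independent, and add to get $\tilde X_1 + \tilde X_2 \le \tilde Y_1 + \tilde Y_2$; but the conditioning argument is self-contained and avoids invoking external machinery.
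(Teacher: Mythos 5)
Your proposal is correct and is essentially the paper's own argument: the paper likewise performs a two-step replacement via the convolution formula $F_{A+C}(t)=\sum_c \P(C=c)F_A(t-c)$ over countable support, just written as one chained computation with an explicit interchange of summation rather than as a separately stated auxiliary claim applied twice. The only cosmetic difference is the order in which the two variables are swapped.
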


The following lemma shows that a viable packing of Poisson variables is also a viable packing of the original Bernoulli variables and is a direct consequence of the above lemmas. 

\begin{restatable}{lemma}{stochmajorizationlem}
  Let $X_{i} \sim \Ber(p_{i}, s_{i}),\ P_{i} \sim \Poi\left(\ln\left(\frac{1}{1-p_{i}}\right), \bar s_{i}\right)$ for $i \in \{1,\ldots, m\}$ be independent and $\forall_{i}\bar s_{i} \ge s_{i}$. Moreover let $P = \sum_{i=1}^{m}P_{i}$ and $B = \sum_{i=1}^{m}X_{i}$. Then $\P(B > 1) \le \P(P > 1)$.
\end{restatable}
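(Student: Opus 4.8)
The plan is to combine the two preceding lemmas in the natural way. The statement asserts that the Bernoulli sum $B$ stochastically dominates\,--\,in the CDF-ordering sense used throughout this section, namely $F_B(t) \ge F_P(t)$ for all $t$\,--\,is replaced by the Poisson sum $P$, and the final probability comparison $\P(B > 1) \le \P(P > 1)$ then follows by evaluating at $t = 1$. So the real content is to establish $\forall_t\ F_B(t) \ge F_P(t)$, and everything about the scaling sizes $s_i, \bar s_i$ must be threaded through carefully.

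First I would handle the size scaling. Each $X_i \sim \Ber(p_i, s_i)$ is by definition $s_i \tilde X_i$ with $\tilde X_i \sim \Ber(p_i)$, and $P_i \sim \Poi(\ln(1/(1-p_i)), \bar s_i)$ is $\bar s_i \tilde P_i$ with $\tilde P_i \sim \Poi(\ln(1/(1-p_i)))$. Lemma~\ref{lem_bern_poiss} applied to the unscaled pair gives $F_{\tilde X_i}(t) \ge F_{\tilde P_i}(t)$ for all $t$, since $\lambda_i = \ln(1/(1-p_i))$ meets the hypothesis with equality. I would then argue that the CDF-domination is preserved under the nonnegative scalings, using $\bar s_i \ge s_i$: scaling $\tilde X_i$ up by the smaller factor $s_i$ only makes its CDF larger (or equal) than scaling by $\bar s_i$, while $P_i$ is already scaled by $\bar s_i$. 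Concretely $F_{X_i}(t) = F_{\tilde X_i}(t/s_i) \ge F_{\tilde X_i}(t/\bar s_i) \ge F_{\tilde P_i}(t/\bar s_i) = F_{P_i}(t)$ for $t \ge 0$, where the first inequality uses monotonicity of a CDF together with $t/s_i \ge t/\bar s_i$, and the case $t < 0$ is trivial since both CDFs vanish. This yields $\forall_t\ F_{X_i}(t) \ge F_{P_i}(t)$ for each individual $i$.

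Next I would lift this from individual items to the two sums via Lemma~\ref{lem_cdf_sum}. That lemma is stated for two variables, so the step from $m=2$ to general $m$ is a routine induction: assuming $F_{\sum_{i=1}^{m-1} X_i}(t) \ge F_{\sum_{i=1}^{m-1} P_i}(t)$, apply Lemma~\ref{lem_cdf_sum} to the pair $\bigl(\sum_{i=1}^{m-1} X_i,\ X_m\bigr)$ against $\bigl(\sum_{i=1}^{m-1} P_i,\ P_m\bigr)$, using independence of the $X_i$ (and of the $P_i$) so that the partial sum is independent of the last term. The hypotheses of Lemma~\ref{lem_cdf_sum} require discrete, countably-supported variables, which is satisfied here: scaled Bernoulli and scaled Poisson variables are discrete with countable support. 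This gives $\forall_t\ F_B(t) \ge F_P(t)$.

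Finally, evaluating at $t = 1$ gives $F_B(1) \ge F_P(1)$, i.e.\ $\P(B \le 1) \ge \P(P \le 1)$, which rearranges to $\P(B > 1) \le \P(P > 1)$, completing the proof. I do not anticipate a serious obstacle here, as the lemma is explicitly flagged in the excerpt as a direct consequence of the two preceding lemmas; the only point demanding care is the scaling argument, where one must correctly exploit $\bar s_i \ge s_i$ and the monotonicity of CDFs to absorb the mismatch between the sizes, rather than assuming the sizes are equal.
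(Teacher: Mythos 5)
Your proof is correct and follows the same route as the paper's: apply Lemma~\ref{lem_bern_poiss} to each pair, lift to the sums via Lemma~\ref{lem_cdf_sum}, and evaluate at $t=1$. In fact you are more careful than the paper's own one-line argument, which silently elides both the scaling step (where $\bar s_i \ge s_i$ and CDF monotonicity are needed) and the induction from two summands to $m$; your treatment of both points is exactly right.
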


The following lemma shows that BP of scaled Poisson variables by their means is viable: 

\begin{lemma}
  Packing of scaled Poisson variables $P_{i} \sim \Poi(\lambda_{i}, s)$ is viable if and only if it is  correct packing of their means $\lambda_{i}$ with bin size $Q^{-1}(\floor{\frac{1}{s}} + 1, 1-\alpha)$, i.e.\ for every $\B_{j}$:
    $\P\left(\sum_{i \in \B_j}P_{i} > 1\right) \le \alpha \iff \sum_{i \in \B_{j}}\lambda_{i} \le Q^{-1}\left(\floor{\frac{1}{s}} + 1, 1-\alpha\right)$.
\end{lemma}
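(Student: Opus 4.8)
The plan is to collapse the left-hand event into a statement about a single scaled Poisson variable and then read off the equivalence through a chain of identities. First I would exploit that all variables assigned to $\B_j$ share the \emph{same} size $s$: writing $P_i = s Z_i$ with $Z_i \sim \Poi(\lambda_i)$ independent, the common scale lets me factor it out, so that $\sum_{i \in \B_j} P_i = s\,W$ where $W = \sum_{i \in \B_j} Z_i$. By the additivity of independent Poisson variables, $W \sim \Poi(\Lambda)$ with $\Lambda := \sum_{i \in \B_j}\lambda_i$. This reduction is exactly what rounding all sizes in a subgroup to a common value buys us; had the sizes differed, the sum would not be a single scaled Poisson and the argument would not go through.

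With this reduction in hand, I would rewrite the overflow event using the integrality of $W$. Since $\P(\sum_{i \in \B_j} P_i > 1) = \P(sW > 1) = \P(W > 1/s)$ and $W$ takes only non-negative integer values, the event $\{W > 1/s\}$ coincides with $\{W \ge \floor{1/s} + 1\}$; verifying that this identity holds whether or not $1/s$ is itself an integer is the one boundary case to treat with care. Consequently the overflow probability equals $1 - \P(W \le \floor{1/s})$, i.e.\ $1$ minus the Poisson CDF of $W$ evaluated at the integer threshold $\floor{1/s}$, so that viability of the bin is equivalent to $\P(\Poi(\Lambda) \le \floor{1/s}) \ge 1-\alpha$.

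The remaining ingredient is monotonicity. For a fixed integer threshold $n$, the map $\lambda \mapsto \P(\Poi(\lambda) \le n)$ is strictly decreasing — its derivative telescopes to $-\lambda^{n}e^{-\lambda}/n! < 0$ — and sends $(0,\infty)$ bijectively onto $(0,1)$, so $Q^{-1}(\,\cdot\,,1-\alpha)$ is well defined (as $1-\alpha \in (0,1)$) and reverses the inequality. Chaining the equivalences then yields $\P(\sum_{i \in \B_j}P_i > 1) \le \alpha \iff \P(\Poi(\Lambda)\le \floor{1/s}) \ge 1-\alpha \iff \Lambda \le Q^{-1}(\floor{1/s}+1,\,1-\alpha)$, which is the claim; because every step is an equivalence, both directions of the \emph{iff} come out simultaneously.

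Analytically the proof is light: Poisson additivity and the monotonicity of the Poisson CDF in its rate are both standard. I expect the only genuine obstacle to be the bookkeeping around the discrete threshold, namely correctly lining up the Poisson-CDF argument $\floor{1/s}$ with the first argument $\floor{1/s}+1$ of $Q^{-1}$ (reconciling the Poisson-CDF index with the gamma-function indexing of $Q$) and treating the strict inequality $W > 1/s$ carefully when $1/s$ is an integer, so as not to introduce an off-by-one into the bin-size formula.
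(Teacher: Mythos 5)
Your proposal is correct and follows essentially the same route as the paper: sum the independent Poissons into a single scaled Poisson with rate $\Lambda=\sum_{i\in\B_j}\lambda_i$, rewrite the event $\{sW>1\}$ via integrality as $\{W\ge\floor{1/s}+1\}$, and invert the Poisson CDF in its rate. Your explicit checks of the boundary case where $1/s$ is an integer and of the strict monotonicity of $\lambda\mapsto\P(\Poi(\lambda)\le n)$ are details the paper leaves implicit, but the argument is the same.
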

\begin{proof}
  As variables $P_{i}$ are independent, the load of any bin $P = \sum_{i\in \B}P_{i}$ is also a scaled Poisson variable $P \sim \Poi(\lambda, s)$, where $\lambda = \sum_{i\in \B} \lambda_{i}$. We have $\frac{1}{s}P \sim \Poi(\lambda)$, so
    $\P(P \le 1) = \P\left(\frac{1}{s}P \le \floor{\frac{1}{s}}\right) = Q\left(\floor{\frac{1}{s}} + 1, \lambda\right)$
  and the thesis follows from applying $Q^{-1}\left(\floor{\frac{1}{s}} + 1, \cdot \right)$. 
\end{proof}

\section{Approximation Ratio}\label{sec:approximation}
We start in Section~\ref{sec:approximation-proof} by a formula for the approximation ratio of RPAP\@. 
Then, in Section~\ref{sec:approximation-opt}, we optimize the approximation ratio by adjusting parameters: the least-probable confident item $p_{\max}$ and the largest minor item $s_{\min}$.

\subsection{Proof of the Approximation Ratio}\label{sec:approximation-proof}
We prove the approximation factor by lower bounding the expected value of an average bin for any packing produced by RPAP, and upper bounding this average bin expected value for any viable packing, in particular the optimal one.

We will prove the upper bound by induction over the number of items in a bin.
We need a stronger induction assumption: 
we want to reward adding items that increase the expected load of a bin \emph{without increasing the overflow probability}.
We model that by introducing a \emph{discount function}: $C(X) = \sum_{x \in [0,1]}(1 - x)\P(X = x)$ ($X$ has finite support, so the sum is well-defined).


\begin{lemma}\label{upper_mean}
  Assume that $\forall_{i \in \B}\: X_{i} \sim \Ber(p_{i},s_{i})$, $s_{i} \le 1$ are independent random variables. If $B = \sum_{i \in \B}X_{i}$, satisfies
  $\P(B > 1) \le \alpha < 1$, 
  then $\E(B) \le \frac{1 + \alpha}{1 - \alpha}$.
\end{lemma}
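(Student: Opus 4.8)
The plan is to strengthen the statement into an inductive invariant that makes essential use of the discount function $C$. Writing $q=\P(B>1)$, note first that $\E(B)\le\frac{1+\alpha}{1-\alpha}$ follows from the sharper bound $\E(B)\le\frac{1+q}{1-q}$, because $q\le\alpha$ and $t\mapsto\frac{1+t}{1-t}$ is increasing on $[0,1)$. To obtain room for an induction I would instead prove the stronger inequality
\[
  \E(B)+\frac{C(B)}{1-q}\;\le\;\frac{1+q}{1-q},
\]
that is, that the potential $\Phi(B):=\frac{1+q}{1-q}-\E(B)-\frac{C(B)}{1-q}$ is nonnegative. Since $C(B)\ge 0$, this immediately yields the desired $\E(B)\le\frac{1+q}{1-q}$, and the extra $C$-term is exactly the ``reward'' described before the statement: it decreases precisely when we add an item that raises the mean without raising the overflow probability.

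The induction runs on the number of items in $\B$. In the base case (empty bin) we have $\E=0$, $q=0$, $C=1$, so $\Phi=1-0-1=0$. For the inductive step I would add one independent item $Y\sim\Ber(p,s)$ with $s\le 1$, put $B'=B+Y$, and track the three quantities by conditioning on $\{Y=0\}$ versus $\{Y=s\}$:
\[
  \E(B')=\E(B)+ps,\qquad \P(B'>1)=q+p\,r,\quad r:=\P(1-s<B\le 1),
\]
\[
  C(B')=C(B)-p\big(w+s\,u\big),\qquad w:=\E[(1-B)\1_{1-s<B\le 1}],\ \ u:=\P(B\le 1-s)=1-q-r .
\]
Here $r$ is the mass in the ``danger zone'' just below the capacity, $w$ the corresponding discounted mass, and $u$ the safely-packed mass. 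The goal is then to show $\Phi(B')-\Phi(B)\ge 0$, so that nonnegativity propagates from the base case. Because all items are nonnegative, every partial sum is dominated by the final $B$, hence $\P(\cdot>1)\le\alpha<1$ throughout and $\Phi$ is well-defined at every stage.

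Forming $\Phi(B')-\Phi(B)$ and dividing by $p$, the difference reduces, after substituting $q'=q+pr$ and $u=1-q-r$ and clearing the denominator $(1-q)(1-q')$, to a single elementary inequality in $q,r,s,u,w$. The only term in which $C$ appears with an unfavourable sign is controlled by the bound $C(B)\le u+w$, obtained by splitting $C$ at $1-s$ and using $1-B\le 1$ on $\{B\le 1-s\}$. With that substitution, and after discarding $s(1-q)(1-q')\le s(1-q)^2$ (valid since $q'\ge q$), the required inequality collapses to
\[
  r\big((1+q)+r-s(1-q)\big)+w\,u\;\ge\;0,
\]
which is manifest: each factor is nonnegative once $s\le 1$, since then $(1+q)+r-s(1-q)\ge 2q+r\ge 0$. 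I expect the genuine obstacle to be the design of the invariant rather than this final algebra: one must discover that the discount should be weighted by $\frac{1}{1-q}$ rather than by a constant, because any constant weighting fails in the boundary case $r=0$, where adding an item raises the mean but leaves the overflow probability unchanged. Identifying that weighting (equivalently, the identity $\E(B)=1-C(B)+\E[(B-1)\1_{B>1}]$ that motivates it) is the crux; everything after it is routine verification.
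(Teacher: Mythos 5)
Your proof is correct and follows essentially the same route as the paper's: induction over the items with an invariant strengthened by the discount function $C$, using the same two recursions for $\P(S_{n+1}>1)$ and $C(S_{n+1})$. The only difference is that you weight the discount by the running $\frac{1}{1-\P(S_n>1)}$ where the paper uses the fixed $\frac{1}{1-\alpha}$, which makes your invariant slightly sharper (yielding $\E(B)\le\frac{1+q}{1-q}$) at the cost of marginally heavier algebra in the inductive step; both reduce to a manifestly nonnegative elementary inequality, and your final inequality $r\left((1+q)+r-s(1-q)\right)+wu\ge 0$ checks out.
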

\begin{proof}
  Without loss of generality, assume that $\B = \{1,\ldots, N\}$, and denote $S_{n} = \sum_{i=1}^{n}X_{i}$.
  We proceed by induction over $n$, with the following assumption:
  \[\E(S_{n}) \le \frac{1}{1 - \alpha}\Bigg(1 + \P(S_{n} > 1) - C(S_{n})\Bigg).\]
  Notice that it is enough to prove the induction, as $C(S_{n}) \ge 0$.

  {\bf Basis of induction:} for $n = 0$ we have $\P(S_{0} = 0) = 1$ so $\P(S_{0} > 1) = 0$, $C(S_{0}) = 1$ and the thesis follows.

  {\bf Inductive step:} Let us denote $S_{n+1} = S_{n} + X$, $X \sim \Ber(p,s)$.
  From the induction assumption
  \[\E(S_{n+1}) = \E(S_{n}) + ps \le \frac{1}{1-\alpha}\Bigg(1 + \P(S_{n} > 1) - C(S_{n})\Bigg) + ps,\]
  so it suffices to show that
  \[\P(S_{n} > 1) - C(S_{n}) + (1-\alpha)ps \le \P(S_{n+1} > 1) - C(S_{n+1}).\]
  We have recursive formulas:
  \[\P(S_{n+1} > 1) = \P(S_{n} + X > 1) = \P(S_{n} > 1) + p\P(S_{n} \in \ocinterval{1-s, 1}) \text{;}\]
  \begin{equation*}
    \begin{aligned}
      C&(S_{n+1}) 
      = (1-p)C(S_{n}) + p\sum_{x \in [0, 1]}(1-x)\P(S_{n} = x - s) =\\
       &= C(S_{n}) - p\sum_{x \in \ocinterval{1-s, 1}}(1-x)\P(S_{n} = x) - ps\P(S_{n} \in [0, 1-s]).
    \end{aligned}
  \end{equation*}
  So after simplifications, we arrive at the inequality:
  \begin{equation*}
    \begin{aligned}
      (1-\alpha)s& \le \P(S_{n} \in \ocinterval{1-s, 1}) + s\P(S_{n} \in [0, 1-s]) + \\
      &+ \sum_{x \in \ocinterval{1-s, 1}}(1-x)\P(S_{n} = x)  = s\P(S_{n} \in [0, 1]) + A
    \end{aligned}
  \end{equation*}
  Where
  \[A = (1-s)\P(S_{n} \in \ocinterval{1-s,1}) + \sum_{x \in \ocinterval{1-s, 1}}(1-x)\P(S_{n} = x) \ge 0\]
  what completes the induction step, as $\P(S_{n} \in [0,1]) \ge 1-\alpha$.
\end{proof}

Next, we will prove the lower bound on the average expected value of all bins in packing produced by RPAP\@.
Recall that in all item groups, we used at some point an Any-Fit algorithm,
so we will need this slightly stronger version of the classic lemma~\cite{CoffmanJr_2013}:

\begin{lemma}\label{lem_greedy}
  Assume that an Any-Fit algorithm packed real values $x_{1},\ldots,x_{n}$ into bins $\B_{1},\ldots,\B_{m}$ where $m \ge 2$. Then $\frac{1}{m}\sum_{j=1}^{m}\sum_{i \in \B_{j}} x_{i} > \frac{1}{2}$
\end{lemma}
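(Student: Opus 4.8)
The plan is to leverage the Any-Fit rule to prove a statement about \emph{pairs} of bins that is stronger than the usual ``at most one bin is at most half full'', and then to average over all pairs. Write $L_j=\sum_{i\in\B_j}x_i$ for the final load of bin $j$; the goal is the equivalent inequality $\sum_{j=1}^m L_j>\frac m2$. Since the $x_i$ are (nonnegative) item sizes, I would first record two elementary monotonicity facts used repeatedly: during the run the load of any fixed bin is nondecreasing (items are only added and are $\ge 0$), and each individual item is at most the final load of the bin that receives it.

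The central claim is that for \emph{every} pair of distinct bins $\B_a,\B_b$ we have $L_a+L_b>1$. To see this, assume without loss of generality that $\B_a$ was opened before $\B_b$, and let $x$ be the item whose arrival caused $\B_b$ to be opened. By the Any-Fit property a new bin is created only when the current item fits in no already-open bin; in particular $x$ did not fit into $\B_a$ at that moment, so the load of $\B_a$ at that time plus $x$ strictly exceeds the capacity $1$. Using the two monotonicity facts ($L_a$ is at least the load of $\B_a$ at that earlier time, and $L_b\ge x$ because $x\in\B_b$), I obtain $L_a+L_b\ge(\text{load of }\B_a\text{ then})+x>1$, with strictness inherited from ``did not fit''.

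It remains to average. Summing the pairwise bound over all $\binom m2$ unordered pairs, and noting that each bin occurs in exactly $m-1$ of them, gives $(m-1)\sum_{j=1}^m L_j=\sum_{a<b}(L_a+L_b)>\binom m2=\frac{m(m-1)}2$. Dividing by $m-1>0$ (here $m\ge 2$ is used both to guarantee at least one pair and to make $m-1$ positive) yields $\sum_{j=1}^m L_j>\frac m2$, and dividing by $m$ gives the claimed bound $\frac1m\sum_{j=1}^m\sum_{i\in\B_j}x_i>\frac12$.

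I expect the only real obstacle to be the pairwise claim: one must identify the correct ``witness'' item $x$ (the one that opened the later bin) and combine the Any-Fit non-fitting inequality with monotonicity to transfer a statement about the load \emph{at opening time} into one about the \emph{final} loads, while keeping the inequality strict. The averaging step is routine, but it is worth stressing that the naive route---``at most one bin has load $\le\frac12$''---only yields $\sum_j L_j>\frac{m-1}2$, which is too weak; the per-pair bound is exactly the refinement needed to reach the clean threshold $\frac12$.
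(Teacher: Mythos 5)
Your proof is correct and rests on the same key fact as the paper's: the Any-Fit rule forces $L_a+L_b>1$ for every pair of distinct bins. The only difference is cosmetic --- you average over all $\binom{m}{2}$ pairs, while the paper sums over a pairing of the bins (with a small doubling trick for odd $m$); both routes are equally valid.
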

\begin{proof}
  Denote $B_{j} := \sum_{i \in \B_{j}} x_{i}$. An Any-Fit algorithm opens a new bin only if the current item does not fit into any already open bin, so
  $\forall_{j}\forall_{l \neq j} B_{j}+B_{l} > 1$.
  If $m$ is even then
  $\sum_{j=1}^{m}B_{j} > \frac{m}{2},$
  and the lemma is proved. If $m$ is odd then
  \[2\sum_{j=1}^{m}B_{j} = \sum_{j=1}^{m-1}B_{j} + \sum_{j=2}^{m}B_{j} + B_{1} + B_{m} > 2\frac{m-1}{2} + 1 = m.\]
\end{proof}

As the lemma above does not hold for the special case with a single bin,
we proceed with the proof for the typical case of at least two bins
and deal with the special case directly in the proof of Theorem~\ref{th:approx-ratio}.
We denote the average expected value of bins $\B_{1}, \ldots, \B_{m}$ by
$\AE(\B) = \frac{1}{m}\sum_{j=1}^{m}\sum_{i \in \B_{j}} p_{i}s_{i}$.
The following three lemmas are very similar and follow easily from the Lemma~\ref{lem_greedy}, so we will prove only the last (most complex) one.
\begin{lemma}\label{lem:confident_mean}
  If $\B_{1},\ldots,\B_{m}$, $m \ge 2$ are bins with the confident items, then their average expected value fulfills $\AE(\B) > \frac{p_{\max}}{2}$.
\end{lemma}

\begin{lemma}\label{lem:minor_mean}
  If $\B_{1},\ldots,\B_{m}$, $m \ge 2$ are bins with the minor items, then their average expected value fulfills $\AE(\B) > \frac{\mu_{0}}{2}$.
\end{lemma}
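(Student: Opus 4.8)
The plan is to mirror the structure of Lemma~\ref{lem:confident_mean}, since the minor items are packed by an Any-Fit algorithm applied to the scaled means $p_i s_i / \mu_0$. The key observation is that the values actually fed to \texttt{PackAnyFit} on line~\ref{algo_minor} are not the expected values $p_i s_i$ themselves, but these values divided by the factor $\mu_0 \in (0,1)$. So I would first invoke Lemma~\ref{lem_greedy} on the packed real values $x_i = p_i s_i / \mu_0$. Since $m \ge 2$, the lemma gives
\[
  \frac{1}{m}\sum_{j=1}^{m}\sum_{i \in \B_{j}} \frac{p_i s_i}{\mu_0} > \frac{1}{2}.
\]

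The second step is simply to clear the denominator. Multiplying both sides by $\mu_0$ (a positive constant, so the strict inequality is preserved) turns the left-hand side into exactly $\AE(\B) = \frac{1}{m}\sum_{j=1}^{m}\sum_{i \in \B_{j}} p_i s_i$, yielding $\AE(\B) > \frac{\mu_0}{2}$, which is the claimed bound. This is the same two-line argument as for the confident items, with $p_{\max}$ replaced by $\mu_0$ as the scaling constant coming from the sizes actually handed to the Any-Fit subroutine.

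I do not anticipate a genuine obstacle here: the whole content is recognizing that the Any-Fit guarantee of Lemma~\ref{lem_greedy} applies to whatever real numbers were passed to the packing routine, and that those numbers are the scaled means $p_i s_i/\mu_0$. The only point requiring minor care is the bookkeeping that $\mu_0 > 0$ (guaranteed since $\mu_0$ solves $\frac{(1-\mu_0)^2}{\mu_0} = \frac{s_{\min}}{\alpha}$ and lies in $(0,1)$), so that multiplying through by $\mu_0$ neither flips nor weakens the strict inequality. Given that the paper explicitly states these three lemmas are "very similar" and defers only the most complex one, this direct reduction to Lemma~\ref{lem_greedy} is exactly the intended proof.
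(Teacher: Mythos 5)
Your proof is correct and is exactly the argument the paper intends: the paper omits the proofs of Lemmas~\ref{lem:confident_mean} and~\ref{lem:minor_mean} as "very similar" instances of applying Lemma~\ref{lem_greedy} to the values actually handed to \texttt{PackAnyFit}, which for minor items are $p_i s_i/\mu_0$. Multiplying the resulting inequality by $\mu_0 > 0$ gives the claim, just as you describe.
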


\begin{lemma}\label{lem:standard_mean}
  If $\B_{1},\ldots,\B_{m}$, $m \ge 2$ are bins with the standard items of the $k$-th subgroup, then their average expected value fulfills $\AE(\B) > \frac{\lambda_{k}(1-p_{\max})}{2(k + 1)}$.
\end{lemma}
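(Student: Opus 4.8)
The plan is to transfer the greedy lower bound of Lemma~\ref{lem_greedy} from the surrogate values that the Any-Fit routine actually packs to the true expected loads $p_i s_i$. For the $k$-th standard subgroup, Algorithm~\ref{algo} (line~\ref{algo_standard}) hands the Any-Fit packer the value $x_i = \ln\!\left(1/(1-p_i)\right)/\lambda_k$ for each item $i$, into bins of capacity $1$. Since $m \ge 2$, I would apply Lemma~\ref{lem_greedy} directly to these real values $x_i$ to obtain $\frac{1}{m}\sum_{j=1}^{m}\sum_{i \in \B_{j}} x_i > \frac12$, which after clearing the constant factor $\lambda_k$ reads
\[
  \frac{1}{m}\sum_{j=1}^{m}\sum_{i \in \B_{j}} \ln\!\left(\frac{1}{1-p_i}\right) > \frac{\lambda_k}{2}.
\]
It then remains to bound each true mean $p_i s_i$ from below by a fixed multiple of $\ln\!\left(1/(1-p_i)\right)$, so that averaging and substituting this display yields the claimed bound.

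Next I would invoke the two defining properties of subgroup $k$. Every item in it has size $s_i \in \ocinterval{\frac{1}{k+1}, \frac{1}{k}}$, hence $s_i > \frac{1}{k+1}$; and, being a standard item, it satisfies $p_i \le p_{\max}$. The pointwise inequality I need is $p \ge (1-p_{\max})\ln\!\left(1/(1-p)\right)$ for all $p \in \ocinterval{0, p_{\max}}$. This follows from the elementary bound $\ln(1+x) \le x$ applied at $x = \frac{p}{1-p} \ge 0$, which gives $\ln\!\left(1/(1-p)\right) \le \frac{p}{1-p} \le \frac{p}{1-p_{\max}}$, where the last step uses $1-p \ge 1-p_{\max}$.

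Combining the size bound with this probability bound gives, for each standard item of the $k$-th subgroup,
\[
  p_i s_i > \frac{p_i}{k+1} \ge \frac{1-p_{\max}}{k+1}\,\ln\!\left(\frac{1}{1-p_i}\right).
\]
Summing over all items in all bins, dividing by $m$, and plugging in the greedy bound from the first paragraph then yields
\[
  \AE(\B) = \frac{1}{m}\sum_{j=1}^{m}\sum_{i \in \B_{j}} p_i s_i
  > \frac{1-p_{\max}}{k+1}\cdot\frac{\lambda_k}{2}
  = \frac{\lambda_k(1-p_{\max})}{2(k+1)},
\]
which is exactly the assertion. I do not expect a genuine obstacle here: the only non-bookkeeping ingredient is the concavity estimate $\ln\!\left(1/(1-p)\right) \le p/(1-p)$, and the one point requiring minor care is checking that the value fed to the Any-Fit packer is precisely $\ln\!\left(1/(1-p_i)\right)/\lambda_k$, so that Lemma~\ref{lem_greedy} is applied to the correct quantities and the $\lambda_k$ factor surfaces cleanly.
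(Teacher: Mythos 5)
Your proposal is correct and follows essentially the same route as the paper: apply Lemma~\ref{lem_greedy} to the surrogate values $\ln\!\left(1/(1-p_i)\right)/\lambda_k$, then transfer to the true means via $s_i > \frac{1}{k+1}$ and $\ln\!\left(\frac{1}{1-p_i}\right) \le \frac{p_i}{1-p_i} \le \frac{p_i}{1-p_{\max}}$. No gaps.
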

\begin{proof}
We packed the standard items of the $k$-th subgroup by $\frac{1}{\lambda_{k}}\ln\left(\frac{1}{1-p_{i}}\right)$ (Algorithm~\ref{algo}, line~\ref{algo_standard}). For any such item $X_{i} \sim \Ber(p_{i},s_{i})$ we have $s_{i} > \frac{1}{k+1}$
and
$\ln\left(\frac{1}{1-p_{i}}\right) \le \frac{p_{i}}{1-p_{i}} \le \frac{k + 1}{1 - p_{\max}}p_{i}s_{i}.$
So from Lemma~\ref{lem_greedy}:
\[\frac{m}{2} < \frac{1}{\lambda_{k}}\sum_{j=1}^{m}\sum_{i \in \B_{j}}\ln\left(\frac{1}{1-p_{i}}\right) \le \frac{k + 1}{\lambda_{k}(1 - p_{\max})} \sum_{j=1}^{m}\sum_{i \in \B_{j}}p_{i}s_{i}.\]
\end{proof}

Additionally, we need the following result (proof in~\cite{paper-appendix}) to find $k$ for which $\frac{\lambda_{k}}{k+1}$ is minimal:
\begin{restatable}{lemma}{invgammaineqlem}\label{lem:inv_gamma_ineq}
  For $\beta \in \cointerval{\frac{1}{2}, 1}$ and $k \in \N$, $k \ge 2$: $Q^{-1}(k, \beta) \le \frac{k}{k+1}Q^{-1}(k+1, \beta)$.
\end{restatable}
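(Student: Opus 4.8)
The plan is to fix $k$, set $\lambda:=Q^{-1}(k+1,\beta)$, and work with the integral representation $Q(n,x)=\frac{1}{(n-1)!}\int_x^\infty t^{n-1}e^{-t}\,dt$ (the incomplete-gamma form consistent with the usage above). Since $x\mapsto Q(k,x)$ is continuous and strictly decreasing, the desired inequality $Q^{-1}(k,\beta)\le\frac{k}{k+1}\lambda$ is equivalent to $Q\!\left(k,\frac{k}{k+1}\lambda\right)\le\beta=Q(k+1,\lambda)$, i.e.\ to $\phi(\lambda)\ge0$ where $\phi(\lambda):=Q(k+1,\lambda)-Q\!\left(k,\frac{k}{k+1}\lambda\right)$. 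The key structural point is that this is exactly where the hypothesis $\beta\ge\frac12$ must be used: $\phi$ is \emph{not} nonnegative for all $\lambda$ (for large $\lambda$, equivalently small $\beta$, the term $Q(k,\frac{k}{k+1}\lambda)$ dominates, since its argument is scaled down by $\frac{k}{k+1}<1$ and its Poisson tail decays more slowly). So I would first convert $\beta\ge\frac12$ into a usable upper bound on $\lambda$.

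Second, I would show $\lambda\le k+1$. As $Q(k+1,\cdot)$ is decreasing, $\lambda\le k+1\iff\beta\ge Q(k+1,k+1)=\P(\Poi(k+1)\le k)$, so it suffices that $\P(\Poi(k+1)\le k)\le\frac12$. This classical Poisson-median fact admits an elementary proof: writing $p_j=\frac{(k+1)^je^{-(k+1)}}{j!}$, for $0\le i\le k$ the ratio $p_{k+1+i}/p_{k-i}=\prod_{l=1}^{i}\frac{(k+1)^2}{(k+1)^2-l^2}\ge1$, so pairing $p_{k-i}$ with $p_{k+1+i}$ gives $\sum_{j\le k}p_j\le\sum_{j\ge k+1}p_j$, hence $\sum_{j\le k}p_j\le\frac12$. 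Combined with $\beta\ge\frac12$, this yields $\lambda\le k+1$.

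Third, for $\lambda\le k+1$ I would prove $\phi(\lambda)\ge0$ directly. Writing $a:=\frac{k}{k+1}\lambda$ (so $a\le k$) and integrating by parts in $k\int_a^\infty t^{k-1}e^{-t}\,dt=\int_a^\infty t^ke^{-t}\,dt-a^ke^{-a}$, the inequality $\phi(\lambda)\ge0$ collapses to the clean statement
\[\int_{a}^{\lambda} t^k e^{-t}\,dt\le a^k e^{-a}.\]
A crude ``length $\times$ maximum'' estimate is too weak here, so the clincher is log-concavity: $\log(t^ke^{-t})=k\log t-t$ is concave, so its tangent at $a$ gives $t^ke^{-t}\le a^ke^{-a}\,e^{(k/a-1)(t-a)}$ for all $t$, with slope $k/a-1\ge0$ because $a\le k$. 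Integrating this bound over $[a,\lambda]$ and substituting $a=\frac{k}{k+1}\lambda$ reduces the required inequality, after simplification, to $u e^{-u}\le e^{-1}$ with $u:=\frac{\lambda}{k+1}\le1$, which holds since $u\mapsto ue^{-u}$ is maximized at $u=1$.

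I expect the main obstacle to be discovering the reformulation in the third step: the raw difference of two incomplete-gamma values is awkward to sign, and only after the integration by parts does the problem become an interval integral controllable by log-concavity. A secondary subtlety worth tracking is that the bound $\lambda\le k+1$ is precisely what makes the tangent slope nonnegative and makes the final elementary inequality point in the right direction; the roles of $\beta\ge\frac12$ and of the endpoint $k+1$ are thus tightly linked and must be handled in the stated order.
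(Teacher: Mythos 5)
Your proof is correct. It shares the paper's overall skeleton --- reformulate via monotonicity of $Q$ in its second argument, use the Poisson-median fact to confine the quantile to a bounded range, and reduce everything to bounding an integral of $t^{k-1}e^{-t}$ (resp.\ $t^k e^{-t}$) over the interval between $\tfrac{k}{k+1}\lambda$ and $\lambda$ by a boundary value of the integrand --- but the decisive estimate is genuinely different. The paper bounds that integral by (length)\,$\times$\,(right endpoint value), which is only valid when the integrand is increasing on the whole interval; this forces the stronger constraint $\tfrac{k+1}{k}\lambda \le k-1$, verified via $Q^{-1}(k,\tfrac12)\le\tfrac{k(k-1)}{k+1}$ and the cited median result. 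You instead use the log-concavity tangent bound $t^ke^{-t}\le a^ke^{-a}e^{(k/a-1)(t-a)}$ at the left endpoint $a=\tfrac{k}{k+1}\lambda$, which after integration reduces cleanly to $ue^{-u}\le e^{-1}$ and needs only $a\le k$, i.e.\ $\lambda\le k+1$ --- the weaker and more natural consequence of $\beta\ge\tfrac12$. Two further points in your favour: your pairing argument $p_{k+1+i}/p_{k-i}=\prod_{l=1}^{i}\frac{(k+1)^2}{(k+1)^2-l^2}\ge1$ makes the median fact self-contained (the paper cites an external reference), and you correctly flag that the lemma genuinely fails for small $\beta$, so the hypothesis $\beta\ge\tfrac12$ cannot be dispensed with. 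The trade-off is that the paper's sup-times-length bound is more elementary pointwise, while your tangent bound is sharper and localizes the use of the hypothesis more transparently; both routes are sound.
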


Summing up Lemmas~\ref{lem:confident_mean},~\ref{lem:minor_mean},~\ref{lem:standard_mean},~\ref{lem:inv_gamma_ineq} and using the expression for $\lambda_{k}$ (Algorithm~\ref{algo} line~\ref{algo:lambda_k}) we get:

\begin{corollary}\label{cor:mu_min}
  The average expected value of the bins in the subgroups having at least 2 bins is lower bounded by
  \begin{equation}\label{mu_min}
    \mu_{\min} := \frac{1}{2}\min\left(p_{\max}, \mu_{0}, (1-p_{\max})\lambda_{\min} \right)
  \end{equation}
  where
  \begin{equation}\label{eq:lambda_min}
    \begin{aligned}
      \lambda_{\min} &= \frac{1}{\floor{\frac{1}{s_{\max}}} + 1}Q^{-1}\left(\floor{\frac{1}{s_{\max}}} + 1, 1 - \alpha\right)
    \end{aligned}
  \end{equation}
\end{corollary}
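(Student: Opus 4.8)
The plan is to treat this corollary as the assembly step that merges the four preceding lemmas into a single uniform lower bound. Lemmas~\ref{lem:confident_mean} and~\ref{lem:minor_mean} already hand us $\AE(\B) > p_{\max}/2$ and $\AE(\B) > \mu_{0}/2$ for the confident and minor bins, so these contribute the first two arguments of the minimum directly. The only genuine work concerns the standard items, where Lemma~\ref{lem:standard_mean} gives a family of bounds $\AE(\B) > \frac{\lambda_{k}(1-p_{\max})}{2(k+1)}$, one per subgroup $k \in \{k_{\min}, \ldots, k_{\max}\}$. To obtain a single bound valid for every standard subgroup simultaneously, I would minimize the quantity $\frac{\lambda_{k}}{k+1}$ over the admissible range of $k$, factoring the common $\frac{1-p_{\max}}{2}$ out front.

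First I would substitute $\lambda_{k} = Q^{-1}(k+1, 1-\alpha)$ from Algorithm~\ref{algo} line~\ref{algo:lambda_k}, so that $\frac{\lambda_{k}}{k+1} = g(k+1)$ where $g(m) := \frac{1}{m}Q^{-1}(m, 1-\alpha)$. The key is to recast Lemma~\ref{lem:inv_gamma_ineq} as a monotonicity statement: with $\beta = 1-\alpha$, the inequality $Q^{-1}(m,\beta) \le \frac{m}{m+1}Q^{-1}(m+1,\beta)$ is exactly $g(m) \le g(m+1)$. Hence $g$ is non-decreasing on its relevant domain, and $\frac{\lambda_{k}}{k+1}$ is minimized at the smallest available index $k = k_{\min} = \floor{1/s_{\max}}$, giving the value $g(k_{\min}+1) = \frac{1}{\floor{1/s_{\max}}+1}Q^{-1}(\floor{1/s_{\max}}+1, 1-\alpha)$, which is precisely $\lambda_{\min}$ as defined in Eq.~\ref{eq:lambda_min}.

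Before concluding, I would verify the two hypotheses of Lemma~\ref{lem:inv_gamma_ineq}. Since the problem assumes $0 < \alpha \le \frac{1}{2}$, we have $\beta = 1-\alpha \in \cointerval{\frac{1}{2}, 1}$, as required. For the index condition $m \ge 2$, note that $s_{\max} \le 1$ forces $k_{\min} = \floor{1/s_{\max}} \ge 1$, so the chain $g(k_{\min}+1) \le g(k_{\min}+2) \le \cdots \le g(k_{\max}+1)$ only invokes the lemma at arguments $m \ge k_{\min}+1 \ge 2$. Combining the three per-group bounds and replacing each standard-subgroup bound by the common lower bound $\frac{(1-p_{\max})\lambda_{\min}}{2}$ then yields $\AE(\B) > \frac{1}{2}\min(p_{\max}, \mu_{0}, (1-p_{\max})\lambda_{\min}) = \mu_{\min}$ for every subgroup with at least two bins.

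I expect the main obstacle to be purely bookkeeping rather than analytic: correctly identifying that the minimizing subgroup is the one with the largest items ($k = k_{\min}$), and checking that the shifted index range never drops below the threshold $m \ge 2$ at which Lemma~\ref{lem:inv_gamma_ineq} becomes available. Since that lemma is assumed, no further inequality on the inverse Poisson CDF is needed here; the remaining steps are the direct substitution of $\lambda_{k}$ and the minimum over the three item groups.
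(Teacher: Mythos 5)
Your proposal is correct and follows the same route the paper takes: the corollary is obtained by combining Lemmas~\ref{lem:confident_mean}, \ref{lem:minor_mean}, and~\ref{lem:standard_mean}, and using Lemma~\ref{lem:inv_gamma_ineq} (read as monotonicity of $Q^{-1}(m,1-\alpha)/m$) to locate the minimizing standard subgroup at $k=k_{\min}$. Your verification of the hypotheses $1-\alpha\in\cointerval{\frac{1}{2},1}$ and $k+1\ge 2$ is exactly the bookkeeping the paper leaves implicit.
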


\begin{theorem}\label{th:approx-ratio}
  If RPAP packed items $X_{1},\ldots, X_{n}$ to $M$ bins, and the optimal packing uses $OPT$ bins then $M \le C\cdot OPT + k_{\max} - k_{\min} + 3$,
  where $C$ is the (asymptotic) approximation constant and equals
  $C = \frac{1 + \alpha}{(1-\alpha)\mu_{\min}}$.
\end{theorem}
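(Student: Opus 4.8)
The plan is to bridge the packing produced by RPAP and the optimal packing through a single quantity that is invariant under any assignment: the total expected load $E = \sum_{i} p_{i}s_{i}$. The argument then splits into two halves. First I would lower-bound $OPT$. Every bin of the optimal packing is viable, so its overflow probability is at most $\alpha$ and Lemma~\ref{upper_mean} applies, giving $\E(B) \le \frac{1+\alpha}{1-\alpha}$ for each of the $OPT$ optimal bins. Summing over those bins, and using that their expected loads add up to $E$, yields $E \le OPT \cdot \frac{1+\alpha}{1-\alpha}$, that is, $OPT \ge \frac{1-\alpha}{1+\alpha}E$.

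Second, I would upper-bound $M$ in terms of the same $E$. RPAP partitions its bins into $k_{\max}-k_{\min}+3$ groups: one confident group, one minor group, and one per standard subgroup $k \in \{k_{\min},\dots,k_{\max}\}$. For each group that uses at least two bins, Corollary~\ref{cor:mu_min} says its average expected load exceeds $\mu_{\min}$; hence if that group uses $m_{g}$ bins with total expected load $E_{g}$, then $E_{g} > m_{g}\mu_{\min}$, so $m_{g} < E_{g}/\mu_{\min}$. For a group that uses at most one bin, I bound its bin count trivially by $1$ — this is precisely the single-bin special case flagged after Lemma~\ref{lem_greedy}, where the averaging lemmas (which require $m \ge 2$) do not apply. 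Summing over all groups, the at-least-two-bin groups together contribute at most $E/\mu_{\min}$ (their expected loads form a subset of the total $E$), while each remaining group contributes at most one bin, for at most $k_{\max}-k_{\min}+3$ in all. This gives $M \le \frac{E}{\mu_{\min}} + (k_{\max}-k_{\min}+3)$.

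Finally I would combine the two halves by substituting the lower bound $E \le OPT\cdot\frac{1+\alpha}{1-\alpha}$ into the upper bound on $M$, obtaining $M \le \frac{(1+\alpha)}{(1-\alpha)\mu_{\min}}OPT + (k_{\max}-k_{\min}+3) = C\cdot OPT + (k_{\max}-k_{\min}+3)$, which is exactly the claim.

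The theorem is essentially a clean assembly of the already-proved Lemma~\ref{upper_mean} and Corollary~\ref{cor:mu_min}, with $E$ as the common currency, so the only genuine difficulty is the bookkeeping rather than any new inequality. The key point to get right is that the lower bounds in Corollary~\ref{cor:mu_min} are expressed in terms of the true expected load $p_{i}s_{i}$, not the surrogate quantities RPAP actually packs by (sizes for confident items, $\mu_{0}$-scaled means for minor items, $\lambda_{k}$-scaled log-probabilities for standard items); this is what guarantees that the per-group expected loads $E_{g}$ really sum to $E$ and that nothing is double-counted across groups. The groups holding a single bin (or none) are what generate the additive term $k_{\max}-k_{\min}+3$, equal to the number of groups, and they are the only place where the $m \ge 2$ hypothesis of the averaging lemmas has to be sidestepped.
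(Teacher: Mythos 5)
Your proposal is correct and follows essentially the same route as the paper: Lemma~\ref{upper_mean} bounds the expected load of each optimal bin, Corollary~\ref{cor:mu_min} bounds the average expected load of the groups with at least two RPAP bins from below, and the groups with at most one bin supply the additive term $k_{\max}-k_{\min}+3$. The only (immaterial) difference is bookkeeping: the paper restricts attention to the items in the $\ge 2$-bin groups and compares against the optimum $OPT'$ for that sub-instance, whereas you compare the total expected load $E$ of all items directly against $OPT$, which if anything slightly streamlines the argument.
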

\begin{proof}
  First, let us consider only the items that belong to the subgroups that were packed into at least 2 bins by RPAP. Without loss of generality let us assume that those are the items $X_{1},\ldots, X_{m}$.
  Let us denote the number of bins those items were packed to by $M'$, the number of bins in the optimal packing of those items by $OPT'$, and their total expected value by $S = \sum_{i=1}^{m}p_{i}s_{i}$. Then from the Lemma~\ref{upper_mean} and Corollary~\ref{cor:mu_min}:
  \[\frac{1}{OPT'}S \le \frac{1 + \alpha}{(1-\alpha)},\quad \frac{1}{M'}S \ge \mu_{\min},\quad M' \le \frac{1 + \alpha}{(1-\alpha)\mu_{\min}}OPT' \le C \cdot OPT.\]

  Finally, notice that we divided the items into $k_{\max} - k_{\min} + 3$ subgroups, so $M - M' \le k_{\max} - k_{\min} + 3$, which is a constant and thus $M'$ is asymptotically equivalent to $M$.
\end{proof}

\subsection{Optimization of the approximation ratio}\label{sec:approximation-opt}
Recall that the approximation constant depends on the values of parameters $p_{\max}$ and $s_{\min}$. To minimize the approximation constant $C = \frac{1 + \alpha}{(1-\alpha)\mu_{\min}}$, we need to maximize the formula for $\mu_{\min}$ (Eq.~\ref{mu_min}).
In case of $p_{\max}$, it comes down to solving the equation: $p_{\max} = (1-p_{\max})\lambda_{\min}$, thus the optimal value is
\begin{equation}\label{eq:p_max}
  p_{\max} = \frac{\lambda_{\min}}{1 + \lambda_{\min}}
\end{equation}

To optimize $s_{\min}$, notice that the expression for $\lambda_{\min}$~(\ref{eq:lambda_min})
does not depend on $s_{\min}$, so we can take arbitrarily small $s_{\min}$ so that
\[\mu_{0} = \frac{2\alpha + s_{\min} - \sqrt{s_{\min}^{2} + 4\alpha s_{\min}}}{2\alpha} \xrightarrow{s_{\min} \to 0} 1.\]
In particular, it is enough to take $s_{\min}$ small enough to make $\mu_{0} \ge p_{\max}$. Solving this inequality for $s_{\min}$ results in:
\begin{equation}\label{eq:s_min}
  s_{\min} \le \frac{\alpha{(1 - p_{\max})}^{2}}{p_{\max}}.
\end{equation}
After such optimizations, we get the approximation constant:
\begin{equation}\label{eq:approx_constant}
  C = 2\frac{1+\alpha}{1 - \alpha}\frac{1 + \lambda_{\min}}{\lambda_{\min}}
\end{equation}
We recall that $C$ depends only on $\alpha$, as $\lambda_{\min}$ (Eq. \ref{eq:lambda_min}) depends on $\alpha$ and $s_{\max}$, but $s_{\max} \leq 1$. In the general case with $s_{\max} = 1$, we get 
\begin{equation}\label{eq:approx_constant_general}
  C = 2 \frac{1+\alpha}{1 - \alpha}\left(1 + \frac{2}{Q^{-1}(2, 1 - \alpha)}\right).
\end{equation}
  
Values of $C$ vary considerably depending on the values of $\alpha$ and $s_{\max}$.
For example for $\alpha = 0.1$ and $s_{\max} = 0.25$: $C \approx 7.47$,
for $\alpha = 0.01$ and $s_{\max} = 1$: $C \approx 29.52$,
while for $s_{\max} = 1$ and $\alpha = 0.001$: $C \approx 90.29$.

To investigate asymptotics of $C$ as $\alpha \to 0$, we need to investigate the asymptotic behavior of $\frac{1}{\lambda_{\min}}$.
Expanding $Q^{-1}(a, z)$ near $z = 1$ with
$Q^{-1}(a, z) = {(-(z - 1)\Gamma(a + 1))}^{1/a} + \O({(z-1)}^{2/a})$~\cite{wolfram},
\begin{equation*}
  C \sim \frac{1}{\lambda_{\min}} \sim \frac{1}{Q^{-1}\left(\floor{\frac{1}{s_{\max}}} + 1, 1-\alpha \right)} = \O\left(\alpha^{-\frac{1}{\floor{\frac{1}{s_{\max}}} + 1}}\right) = \O\left(\sqrt[\floor{1/s_{\max}} + 1]{1/\alpha}\right)
\end{equation*}
and in the general case with $s_{\max} = 1$, we get $C = \O\left(\sqrt{1/\alpha}\right)$.

The resulting asymptotics is worse than Kleinberg's~\cite{bursty} $\O\left(\sqrt{\frac{\log (1/\alpha)}{\log\log (1/\alpha)}}\right)$.
However, we recall that the asymptotic analysis in~\cite{bursty} hides the multiplicative constant arising from splitting items into subgroups.
Thus, the exact approximation ratio of their algorithm is better than ours most likely only for very small $\alpha$ values. In cloud computing, the SLOs are usually not greater than 4 nines (corresponding to $\alpha \ge 0.0001$) thus our analysis most likely results in a better approximation constant for $\alpha$ relevant to the field.

\section{Dependence on the maximal overflow probability}\label{sec:ff-alpha}

From the theoretical perspective, the dependence of the approximation constant on $\alpha$ is not perfect, especially when for other distributions, like Poisson or normal,
there are approximation algorithms whose constant does not depend on $\alpha$~\cite{Breitgand_2012,Cohen_2017} (for the Bernoulli distribution no such algorithm is known).
\cite{bursty} prove that any algorithm
based on a single \emph{effective size}
cannot achieve a better approximation constant than $\Omega(\alpha^{-1/2})$.
The following theorem shows that the family of Any-Fit algorithms (not using the effective size approach) has the same upper bound. We start with a technical lemma (proof in~\cite{paper-appendix}).

\begin{restatable}{lemma}{counterexamplelem}\label{lem:counterexample}
  If $X_{1}, X_{2},\ldots, X_{n} \sim \Ber(2\alpha)$ independent, then there exists $n = \Omega(\alpha^{-1/2})$ for which $\P(\sum_{i=1}^{n} X_{i} \le 1) \ge 1 - \alpha$.
\end{restatable}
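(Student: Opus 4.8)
The plan is to observe that $S := \sum_{i=1}^{n} X_{i}$ is Binomial$(n, 2\alpha)$, since the $X_{i}$ are i.i.d.\ $\Ber(2\alpha)$, and that $\{S \le 1\}$ is exactly the complement of $\{S \ge 2\}$. So it suffices to exhibit some $n = \Omega(\alpha^{-1/2})$ for which $\P(S \ge 2) \le \alpha$; the statement $\P(S \le 1) \ge 1 - \alpha$ then follows immediately. I would deliberately avoid evaluating the tail exactly and instead control $\P(S \ge 2)$ by a pairwise-collision (second-moment) estimate, which is both cleaner and makes the $\alpha^{-1/2}$ scaling transparent.

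The key step is the domination $\1[S \ge 2] \le \binom{S}{2}$, valid because both sides vanish for $S \in \{0,1\}$ while $\binom{S}{2} \ge 1$ whenever $S \ge 2$. Taking expectations and using that $\binom{S}{2}$ counts pairs of active items, linearity of expectation gives
\[
\P(S \ge 2) \;\le\; \E\!\left[\binom{S}{2}\right] \;=\; \binom{n}{2}(2\alpha)^{2} \;=\; 2\alpha^{2}\,n(n-1),
\]
since each of the $\binom{n}{2}$ index pairs contributes probability $(2\alpha)^{2}$ of both items being active simultaneously.

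With this bound in hand I would set $n = \floor{(2\alpha)^{-1/2}}$. Then $n(n-1) < n^{2} \le \tfrac{1}{2\alpha}$, so $\P(S \ge 2) \le 2\alpha^{2} n(n-1) < \alpha$, whence $\P(S \le 1) > 1 - \alpha$ as required; and this $n$ is manifestly $\Omega(\alpha^{-1/2})$, since $n \ge (2\alpha)^{-1/2} - 1 = \tfrac{1}{\sqrt{2}}\alpha^{-1/2} - 1$.

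I do not anticipate a serious obstacle here: the crude collision bound is already sharp enough because, for the small success probability $p = 2\alpha$, the dominant contribution to $\P(S \ge 2)$ is precisely the double-collision term of order $(np)^{2}$, while the all-idle and single-active mass lie inside $\{S \le 1\}$. The only points needing care are fixing the constant in front of $\alpha^{-1/2}$ so the inequality closes cleanly, and ensuring the floor yields a positive integer (which holds since the standing assumption $\alpha \le \tfrac12$ forces $(2\alpha)^{-1/2} \ge 1$). As a consistency check one could instead use the exact identity $\P(S \le 1) = (1-2\alpha)^{n-1}\bigl(1 + 2\alpha(n-1)\bigr)$ together with Bernoulli's inequality, but the second-moment route is shorter and exposes the scaling directly.
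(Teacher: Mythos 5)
Your proof is correct, but it takes a genuinely different route from the paper's. The paper reuses its Bernoulli-to-Poisson domination machinery (Lemmas~\ref{lem_bern_poiss} and~\ref{lem_cdf_sum}) to lower-bound $\P(S \le 1)$ by $Q\left(2, n\ln\left(\frac{1}{1-2\alpha}\right)\right)$, then bounds $\ln\left(\frac{1}{1-2\alpha}\right) \le 4\alpha$ and appeals to the asymptotic expansion $Q^{-1}(2, 1-\alpha) \sim \sqrt{2\alpha}$ to conclude that a suitable constant $c$ in $n = c\alpha^{-1/2}$ exists; the constant is left implicit and the argument leans on an external expansion of the inverse incomplete gamma function. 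You instead work directly with the Binomial law of $S$ and bound $\P(S \ge 2) \le \E\left[\binom{S}{2}\right] = \binom{n}{2}(2\alpha)^{2}$ --- equivalently a union bound over the $\binom{n}{2}$ pairwise collision events --- which with $n = \floor{(2\alpha)^{-1/2}}$ gives $\P(S \ge 2) < \alpha$ with a fully explicit constant and no asymptotics. Your domination $\1[S \ge 2] \le \binom{S}{2}$ is checked correctly on $S \in \{0,1\}$ and $S \ge 2$, the expectation computation is right, and the standing assumption $\alpha \le \frac{1}{2}$ does guarantee $n \ge 1$. What the paper's route buys is thematic economy (the counterexample lemma is proved with the same tools as the main correctness argument, and the connection to $Q^{-1}(2,1-\alpha)$ mirrors the quantity $\lambda_{\min}$ appearing in the approximation constant); what yours buys is a shorter, self-contained, elementary argument with explicit constants that needs nothing beyond linearity of expectation.
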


\begin{theorem}
Every Any-Fit for scaled Bernoulli trials has $\Omega(\alpha^{-1/2})$ approximation ratio.
\end{theorem}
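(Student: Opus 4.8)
The plan is to exhibit a single item sequence that defeats every Any-Fit algorithm simultaneously, by pairing the ``light but space-filling'' items of Lemma~\ref{lem:counterexample} with tiny capping items. Concretely, I would use two item types: \emph{light} items $L \sim \Ber(2\alpha, 1)$ (exactly the items of Lemma~\ref{lem:counterexample}, for which some $n = \Omega(\alpha^{-1/2})$ fit viably in one bin) and \emph{blocker} items $B \sim \Ber(\tfrac{1}{2}, \varepsilon)$ of tiny size, say $\varepsilon = \alpha$. The instance presents them in the interleaved order $L_{1}, B_{1}, L_{2}, B_{2}, \ldots, L_{N}, B_{N}$ for large $N$, and the whole argument reduces to a handful of elementary viability computations.

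The crux is that a blocker exactly \emph{caps} a light without being expensive. Since a light fills the bin when it is on, $\{L, B\}$ overflows only if both are on, so $\P(L + B > 1) = 2\alpha \cdot \tfrac{1}{2} = \alpha$ and the pair is viable; adding any third item of either type forces two simultaneously-on items one of which is a light, pushing the overflow probability to at least $\tfrac{3}{2}\alpha > \alpha$, so both $\{L, B, L'\}$ and $\{L, B, B'\}$ are infeasible. Conversely, two lights are compatible ($4\alpha^{2} \le \alpha$), a full group of $n = \Omega(\alpha^{-1/2})$ lights fits in one bin by Lemma~\ref{lem:counterexample}, and $\floor{1/\varepsilon}$ blockers fit in one bin purely by the deterministic size bound $\floor{1/\varepsilon}\,\varepsilon \le 1$. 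These facts are what make the blocker cheap for the optimum yet lethal for a greedy packer.

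With these facts I would argue by induction on $j$ that after $L_{1}, B_{1}, \ldots, L_{j}, B_{j}$ every open bin is a capped pair $\{L_{i}, B_{i}\}$: when $L_{j+1}$ arrives each such bin rejects it, so every Any-Fit algorithm is forced to open a fresh bin; when $B_{j+1}$ arrives it fits \emph{only} into the just-opened $\{L_{j+1}\}$ (every capped pair rejects it), so every Any-Fit places it there. Hence every Any-Fit uses at least $N$ bins, while packing all lights together in $\ceil{N/n}$ bins and all blockers together in $\ceil{N/\floor{1/\varepsilon}}$ bins is viable, giving $OPT = \O(N/n) + \O(N\varepsilon) = \O(N\sqrt{\alpha})$ and thus an approximation ratio at least $N/OPT = \Omega(\alpha^{-1/2})$ as $N \to \infty$. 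The main obstacle is purely the viability bookkeeping: I must make the capping strict (e.g. take the blocker probability slightly below $\tfrac{1}{2}$, so that $\{L, B\}$ sits strictly below $\alpha$ while every three-element set stays strictly above $\alpha$), and I must verify the forced-placement step carefully --- that at each arrival exactly the intended bins accept the incoming item --- so that the conclusion holds for \emph{all} Any-Fit variants rather than merely First Fit.
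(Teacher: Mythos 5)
Your proposal is correct and follows essentially the same route as the paper: interleave near-bin-filling Bernoulli items with tiny capping items so that every Any-Fit algorithm is forced into one pair per bin, while the optimum packs the two classes separately, using Lemma~\ref{lem:counterexample} for the large items. The only difference is a detail of the construction --- the paper realizes the cap with deterministic items $\Ber(1,\epsilon_{2i})$ of carefully decreasing sizes so that rejection is size-based, whereas you use identical stochastic blockers $\Ber(\tfrac{1}{2},\varepsilon)$ with probabilistic rejection --- and your closing worry about strictness is unnecessary, since viability only requires overflow probability $\le\alpha$ and your three-item sets already exceed $\alpha$ strictly.
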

\begin{proof}
  Let us fix the value of $\alpha' = \alpha + \epsilon$ for an arbitrarily small $\epsilon$. We consider the variables $X_{i} \sim \Ber(\alpha', 1 - \epsilon_{2i - 1})$, $Y_{i} \sim \Ber(1, \epsilon_{2i})$,
  where
  $\epsilon_{1} = \sqrt{\alpha}$, $\epsilon_{i} = \epsilon_{i-1} - \frac{\sqrt{\alpha}}{2^{i+2}}$.
  Then, we can pack $\Omega(\alpha^{-1/2})$ items $Y_{i}$ to a single bin, as taking the first $\alpha^{-1/2}$ of such items requires the capacity of at most $\alpha^{-1/2} \cdot \sqrt{\alpha} = 1$. Similarly, from the Lemma~\ref{lem:counterexample} we can pack $\Omega(\alpha^{-1/2})$ items $X_{i}$ to a single bin.

  Let us now consider the case when those items appear on the input in the order $X_{1}, Y_{1}, X_{2}, Y_{2},\ldots$.
  Using induction, we prove that every Any-Fit algorithm packs into the $i$-th bin just two items, $X_{i}, Y_{i}$.
  First, for every $i$, $X_{i}$ and $Y_{i}$ fit into one bin, because $1 - \epsilon_{2i-1} + \epsilon_{2i} < 1$.
  It is easy to calculate that no further item fits into the bin with items $X_{i}$ and $Y_{i}$.
\end{proof}

\section{Evaluation by simulations}\label{sec:experiments}


We study RPAP and derived algorithms in the average case by comparing it with our modification of FF: First Fit Rounded (FFR), and our implementation of Kleinberg et al.~\cite{bursty}.

FFR rounds up items' sizes to the integer multiple of $\epsilon > 0$, i.e. finds the smallest $k$ such that $\hat s:= k\epsilon \ge s$ and packs the rounded items using FF.
We cannot compare RPAP with First Fit, because First Fit computes overflow probabilities, which is \#P-hard for sums of scaled Bernoulli variables~\cite{bursty}.
Yet, for instances with few very small items (i.e.\ comparable to $\epsilon$), the results of FFR should be close to the results of FF\@. 
In our experiments, we take $\epsilon = 10^{-4}$.

We do not suspect RPAP to perform well compared to FFR, as RPAP was designed for theoretical purposes, so we also analyze a derived algorithm, the RPAP Combined (RPAPC).
RPAPC separates the items into the same groups as RPAP, 
but within each group, it packs an item into the first bin according to  RPAP or FFR\@.
RPAPC in principle works like RPAP with First-Fit as an Any-Fit algorithm. However, when an item does not fit into a bin, RPAPC, just like FFR, approximates the probability of the overflow by rounding up items' sizes to the integer multiple of $\epsilon$.
This allows us to pack some items that RPAP would not pack because of the upper bounds we made for the proofs.
Notice, that this means that RPAPC opens a new bin only if the current item does not fit into any already open bin in its group, so the Lemma~\ref{lem_greedy} holds and from it follows the rest of the Section~\ref{sec:approximation-proof} and the following corollary:

\begin{corollary}\label{cor:rpapc}
  RPAPC is an approximation algorithm with an approximation constant not greater than RPAP's.
\end{corollary}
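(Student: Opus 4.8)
The plan is to show that RPAPC's packing satisfies exactly the structural property that drives the whole analysis of Section~\ref{sec:approximation-proof}, namely the pairwise bin-load inequality behind Lemma~\ref{lem_greedy}, when bins are measured by the \emph{RPAP packing values} $x_{i}$ (the sizes $s_{i}$ for confident items, the scaled means $p_{i}s_{i}/\mu_{0}$ for minor items, and the scaled log-values $\ln(1/(1-p_{i}))/\lambda_{k}$ for standard items). The first step is to make precise that RPAPC's fitting rule is a \emph{relaxation} of RPAP's: within each group, RPAPC puts an item into an existing bin whenever it fits by RPAP's criterion \emph{or} by the rounded-size (FFR) criterion, and opens a new bin only when both fail. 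In particular, whenever RPAPC opens a new bin, the triggering item fails RPAP's criterion against every already-open bin in its group.

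Next I would verify that $B_{j} + B_{l} > 1$ for every pair of RPAPC bins in a group, where $B_{j} := \sum_{i \in \B_{j}} x_{i}$ is the RPAP-value load. Suppose $\B_{l}$ was opened after $\B_{j}$, triggered by item $i^{*}$. Since RPAPC opened $\B_{l}$, the item $i^{*}$ did not fit into $\B_{j}$ by RPAP's criterion at that moment, i.e.\ the RPAP-value load of $\B_{j}$ at that time plus $x_{i^{*}}$ already exceeded $1$. As RPAP-value loads only grow when items are added, $B_{j}$ is at least its value then, and $B_{l} \ge x_{i^{*}}$ because $i^{*} \in \B_{l}$; hence $B_{j} + B_{l} > 1$. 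Since the condition is symmetric in $j$ and $l$, this covers all pairs, so Lemma~\ref{lem_greedy} applies unchanged to every group packed into $m \ge 2$ bins. Lemmas~\ref{lem:confident_mean}--\ref{lem:standard_mean}, and hence Corollary~\ref{cor:mu_min}, then yield the same lower bound $\mu_{\min}$ on the average expected bin load.

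The remaining ingredients of Theorem~\ref{th:approx-ratio} are untouched. Lemma~\ref{upper_mean} bounds $\E(B)$ for \emph{any} viable packing, in particular the optimum, and never refers to the algorithm, so it holds for RPAPC's input verbatim. Moreover RPAPC stays viable: its RPAP branch is viable by Section~\ref{sec:correctness}, while its FFR branch rounds sizes \emph{up}, so the overflow probability it checks over-estimates the true one, and a check passing at level $\alpha$ guarantees true overflow at most $\alpha$. Feeding the same $\mu_{\min}$ lower bound and the same $\frac{1+\alpha}{1-\alpha}$ upper bound into the proof of Theorem~\ref{th:approx-ratio} reproduces it with the identical constant $C$; thus RPAPC's approximation constant is at most $C$, i.e.\ not greater than RPAP's.

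The main obstacle is the bookkeeping in the key inequality: one must consistently measure bins by the RPAP values $x_{i}$ rather than by the rounded FFR sizes, and must use the \emph{direction} of the relaxation correctly. The point is that a more permissive fitting rule makes ``open a new bin'' a stronger event, so it certainly implies that RPAP's stricter criterion failed; and the extra items RPAPC may place by its FFR branch only \emph{increase} some $B_{l}$, which can only help $B_{j} + B_{l} > 1$. Getting this implication the right way round is what makes the argument close rather than break.
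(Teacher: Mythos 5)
Your proposal is correct and follows essentially the same route as the paper: the paper's entire justification is the observation that RPAPC opens a new bin only when the item fails to fit (by either criterion) into every open bin of its group, so the Any-Fit property of Lemma~\ref{lem_greedy} still holds when bins are measured by the RPAP packing values, and the rest of Section~\ref{sec:approximation-proof} carries over unchanged. You merely spell out the pairwise inequality $B_{j}+B_{l}>1$ and the viability of the FFR branch in more detail than the paper does, which is a faithful elaboration rather than a different argument.
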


We also compare our algorithms against Kleinberg et.al.~\cite{bursty} and against its combined version (KleinbergC) --- designed analogically as RPAPC.

\begin{figure*}[t]
  \centering
  \begin{subfigure}[t]{.30\textwidth}
    \includegraphics[width=\textwidth]{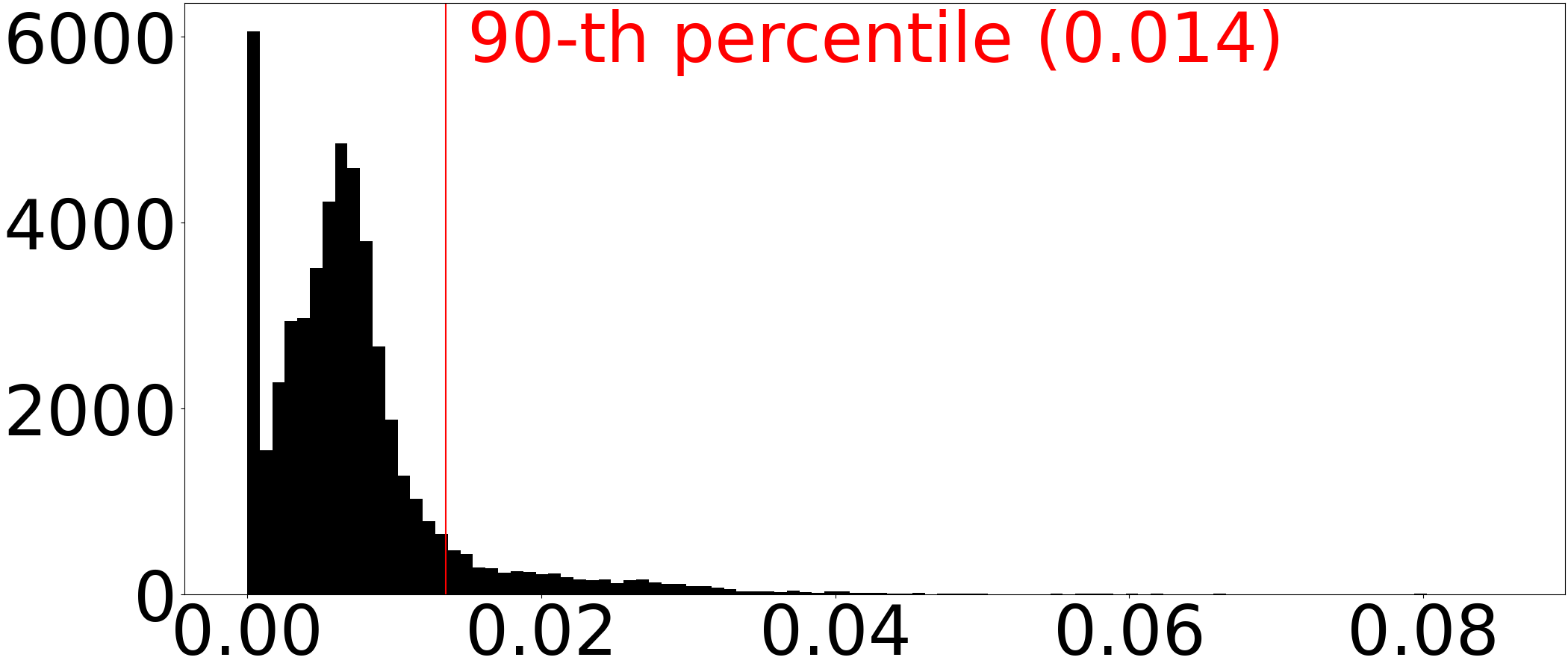}
    \caption{$L_1$ distances}\label{google_scores}
  \end{subfigure}\hspace{.03\textwidth}%
  \begin{subfigure}[t]{.30\textwidth}
    \includegraphics[width=\textwidth]{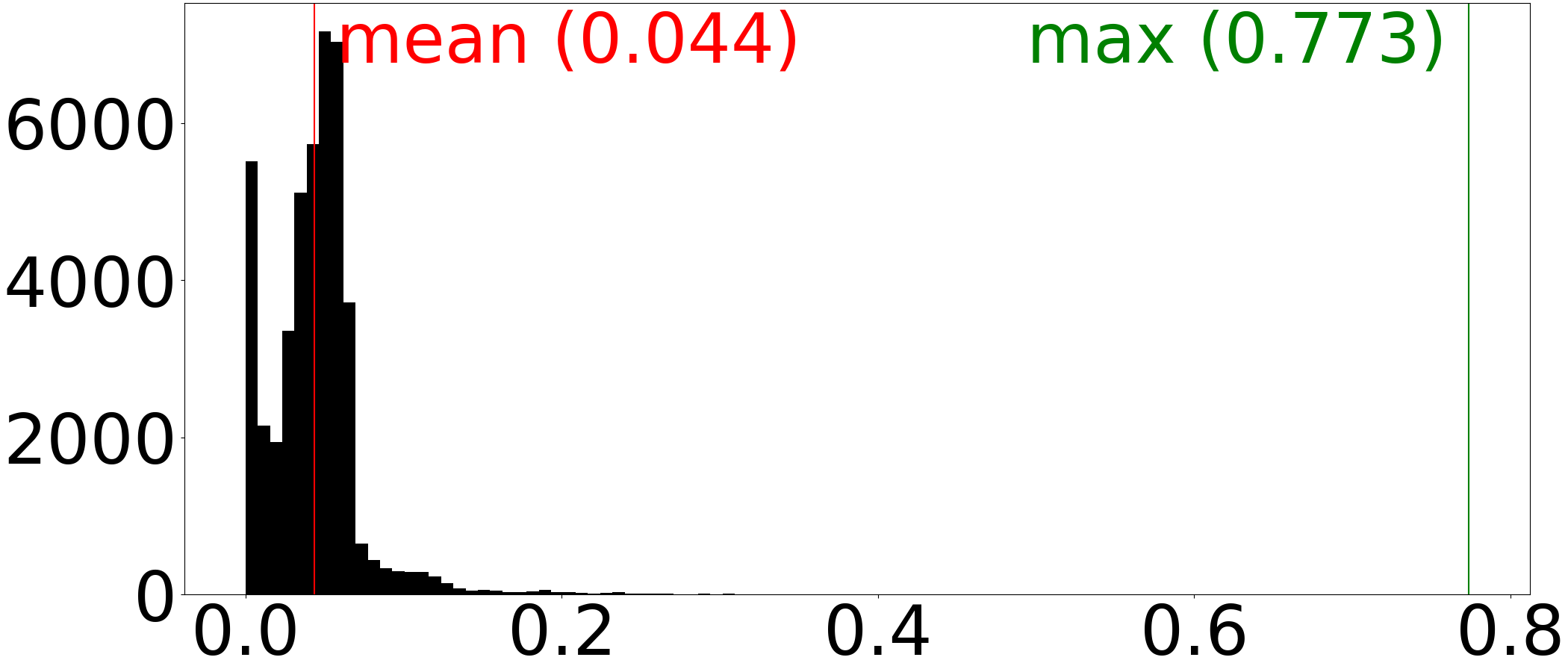}
    \caption{$s_i$, item sizes}\label{google_sizes}
  \end{subfigure}\hspace{.03\textwidth}%
  \begin{subfigure}[t]{.30\textwidth}
    \includegraphics[width=\textwidth]{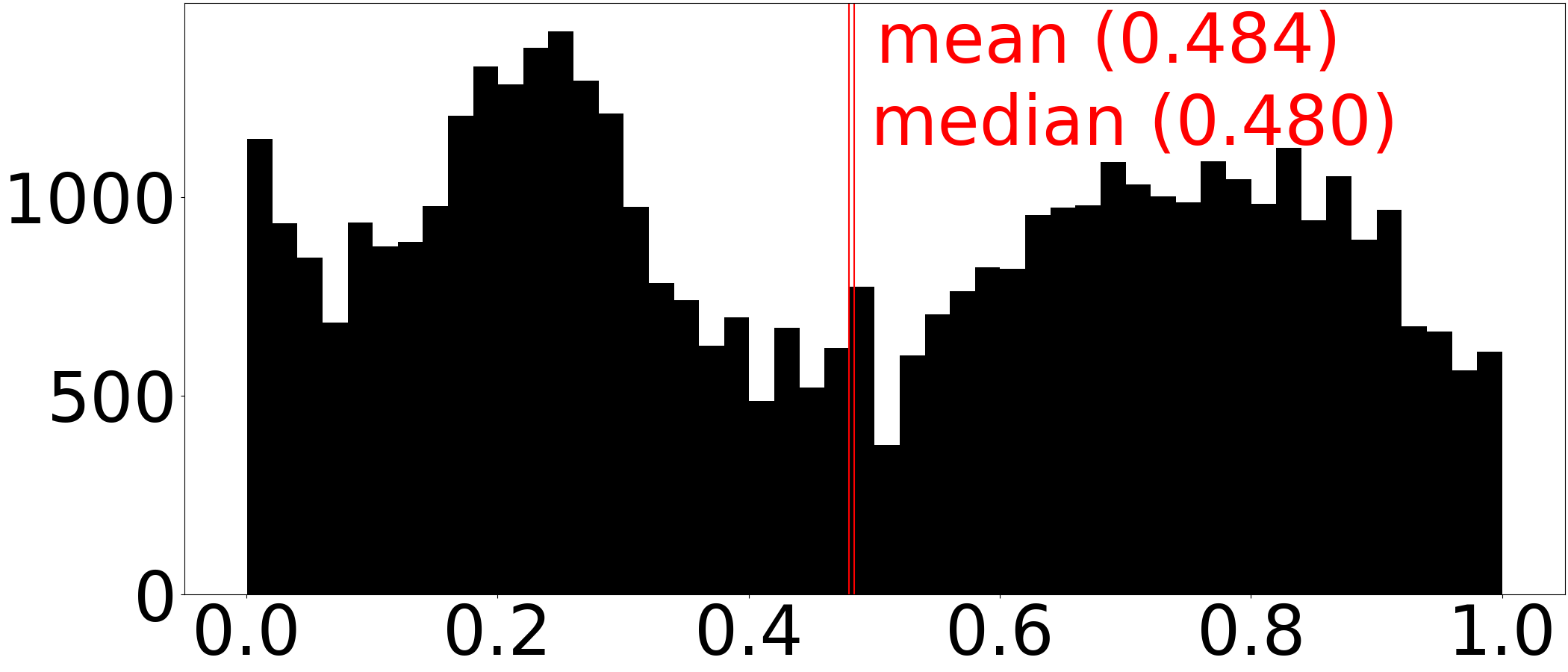}
    \caption{$p_i$, item probabilities}\label{google_probs}
  \end{subfigure}
  \caption{PDFs of items' metrics in \emph{Google} dataset. (a) is a PDF of pair-wise $L_1$ distances between the CDFs of the derived Bernoulli items and the original data. }\label{google_data}
\end{figure*}


\begin{figure*}[t]
  \centering
  \begin{subfigure}{.5\textwidth}
    \includegraphics[width=\textwidth]{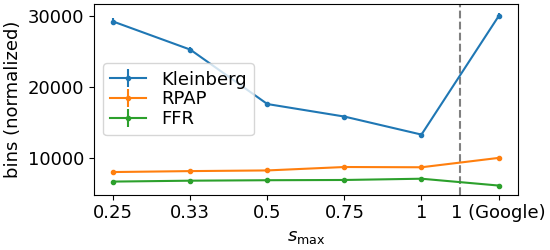}\caption{$\alpha=0.1$}
  \end{subfigure}%
  \begin{subfigure}{.5\textwidth}
    \includegraphics[width=\textwidth]{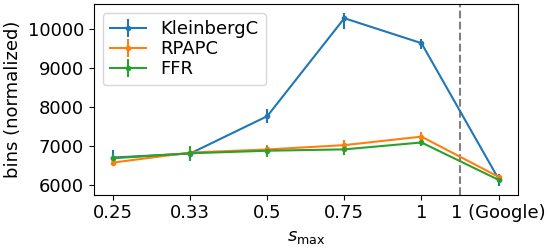}\caption{$\alpha=0.1$} 
  \end{subfigure}%

  \begin{subfigure}{.5\textwidth}
    \includegraphics[width=\textwidth]{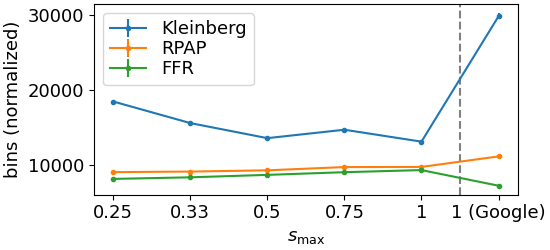}\caption{$\alpha=0.01$}
  \end{subfigure}%
  \begin{subfigure}{.5\textwidth}
    \includegraphics[width=\textwidth]{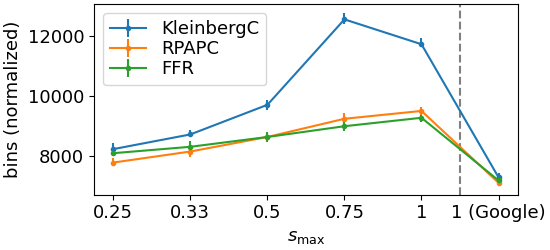}\caption{$\alpha=0.01$}
  \end{subfigure}%

  \begin{subfigure}{.5\textwidth}
    \includegraphics[width=\textwidth]{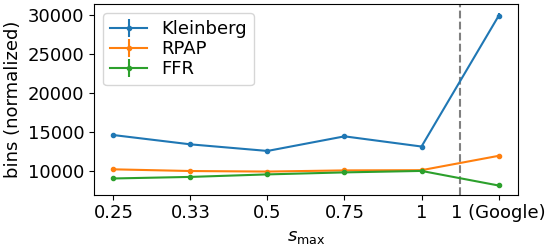}\caption{$\alpha=0.001$}
  \end{subfigure}%
  \begin{subfigure}{.5\textwidth}
    \includegraphics[width=\textwidth]{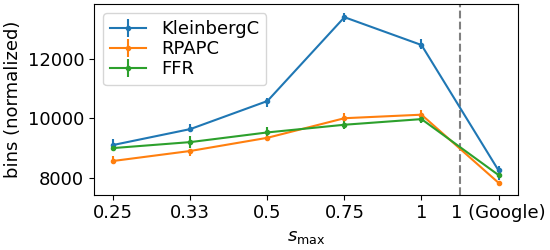}\caption{$\alpha=0.001$}
  \end{subfigure}%

  \caption{Number of bins the items were packed into for \emph{Uniform} (the first 5 columns) and \emph{Google} (last column) instances, by the original (left) and combined (right) algorithms.
  Each point is a median of 10 instances.
  Bars show a minimum and maximum of 10 instances.
  }\label{fig:results}
\end{figure*}

As cloud schedulers work with large volumes of tasks, we are interested in analyzing performance on large instances: we performed experiments with $n=5000$ items.
The values of parameters $s_{\min}$ and $p_{\max}$ for RPAP were set according to the formulas~(\ref{eq:p_max}) and~(\ref{eq:s_min}) to show how RPAP performs without optimizing for a particular dataset. Kleinberg's algorithm does not have any parameters that could be optimized.
For RPAPC we tuned $s_{\min}$ and $p_{\max}$ with respect to $\alpha$, $s_{\max}$ and a dataset (as cloud workload is generally stable over time, schedulers are typically tuned for the standard workload).
To tune, we created new (smaller) instances for every combination of $\alpha$ and $s_{\max}$ and then did a grid search. 

We tested $\alpha \in \{0.1, 0.01, 0.001\}$ and $s_{\max} \in \{1, 0.75, 0.5, 0.33, 0.25\}$. Each dot on a plot is a median from 10 experiments (each having the same parameter values but a different instance), then normalized by dividing by the average expected value of an item in the dataset.
The (very short) vertical lines are the minimum and maximum of those 10 experiments 
--- results are stable over instances.
The lines connecting the dots are introduced for readability.

We generated 3 datasets.
The \emph{uniform} dataset is sampled from the uniform distribution: the sizes are sampled from the $\ocinterval{0,s_{\max}}$ interval, and the probabilities from $\ocinterval{0,1}$.
In the \emph{normal} dataset, sizes are sampled from the normal distribution $N(0.1, 1)$, truncated to the $\ocinterval{0,s_{\max}}$ interval, while the probabilities are sampled from the uniform distribution on the $\ocinterval{0,1}$ interval. Its results were very similar to the results for the uniform dataset, so we omit them.
The \emph{Google} dataset is derived from~\cite{Wilkes2020a}. 
We started with the instantaneous CPU usage dataset~\cite{Janus_2017}.
For every task, we calculated the scaled Bernoulli distribution that is the closest to the task's empirical CPU usage (as measured by the $L_{1}$ distance between CDFs).
Finally, we filtered out the 10\% of items that had the highest distance from the original data (Figure~\ref{google_scores}), in order not to experiment on tasks for which the Bernoulli approximation is the least exact. 
As on \emph{Google}, a vast majority of items is small (Figure~\ref{google_sizes}), results for different $s_{\max}$ are very similar, we only show $s_{\max}=1$.
Figure~\ref{fig:results} shows results.

As expected, both RPAP and Kleinberg algorithms produce significantly worse results than FFR for all datasets (with \emph{Google} being particularly unsatisfactory). In contrast, RPAPC achieves even over $4\%$ better results than FFR on the \emph{Google} dataset for small $\alpha$ values, and on the \emph{uniform} dataset for small $s_{\max}$ values. Moreover, the overflow probabilities in bins packed by RPAPC are on average lower than those in bins packed by FFR (Figure~3 in the appendix~\cite{paper-appendix}): a good packing algorithm can result in both a lower number of bins and a lower overflow probability. On the other side, the KleinbergC algorithm performs worse than both FFR and RPAPC for all datasets, although for \emph{Google} the difference is small. There are significant differences between results on the \emph{uniform} and the \emph{Google} datasets. A possible reason is that the \emph{Google} dataset has a skewed distribution of sizes of items (Figure~\ref{google_sizes}) with mean $0.044$ and maximal size $0.773$, although probabilities are distributed reasonably uniformly (Figure~\ref{google_probs}). 

\section{Conclusions}

We propose RPAP, an online algorithm for Stochastic Bin Packing with scaled Bernoulli items. 
RPAP produces a viable packing that keeps the overflow probability of any bin below the requested parameter $\alpha$.
We also prove that RPAP is an approximation algorithm with an approximation factor that depends only on the maximal overflow probability $\alpha$.
%
We derive a combined approach, RPAPC, that has the same guarantees as RPAP.
In simulations, we compare RPAP and RPAPC with \cite{bursty}, a state-of-the-art algorithm with proven worst-case guarantees, and FFR, a heuristic with no guarantees. Our approaches consistently surpass \cite{bursty}. 
Additionally, RPAPC is close to FFR, outperforming it 
on \emph{Google} by 4\% and on \emph{Uniform} datasets with small items.

\noindent \textbf{Acknowledgements and Data Availability:} Authors thank Arif Merchant for his comments on the manuscript. This research is supported by a Polish National Science Center grant Opus (UMO-2017/25/B/ST6/00116). Data supporting this study and the code used to perform the simulations and generate the plots will be available from~\cite{artifact}.

\bibliographystyle{splncs04}
\bibliography{bibliography}

\newpage{}
\section{Appendix}

\begin{figure*}[t]
  \centering
  \begin{subfigure}{.5\textwidth}
    \includegraphics[width=\textwidth]{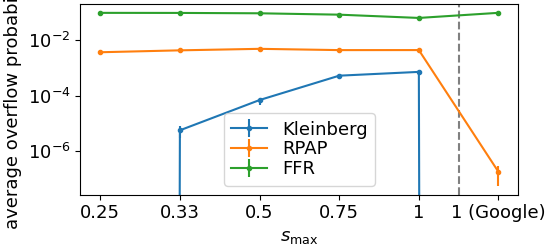}\caption{$\alpha=0.1$}
  \end{subfigure}%
  \begin{subfigure}{.5\textwidth}
    \includegraphics[width=\textwidth]{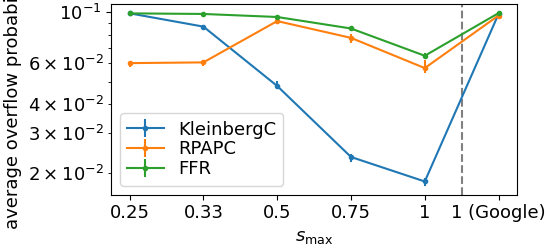}\caption{$\alpha=0.1$}
  \end{subfigure}%

  \begin{subfigure}{.5\textwidth}
    \includegraphics[width=\textwidth]{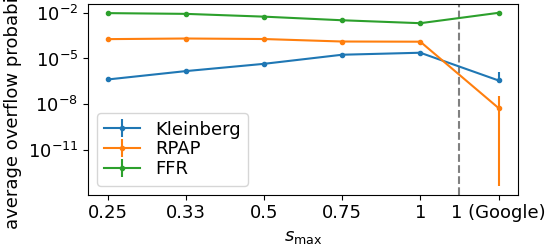}\caption{$\alpha=0.01$}
  \end{subfigure}%
  \begin{subfigure}{.5\textwidth}
    \includegraphics[width=\textwidth]{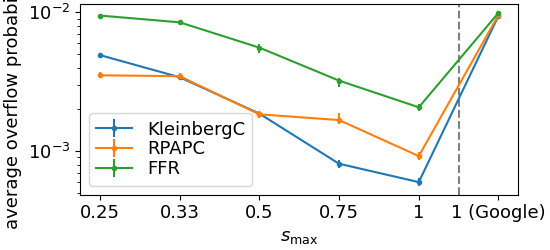}\caption{$\alpha=0.01$}
  \end{subfigure}%

  \begin{subfigure}{.5\textwidth}
    \includegraphics[width=\textwidth]{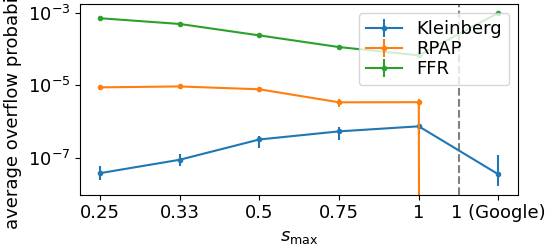}\caption{$\alpha=0.001$}
  \end{subfigure}%
  \begin{subfigure}{.5\textwidth}
    \includegraphics[width=\textwidth]{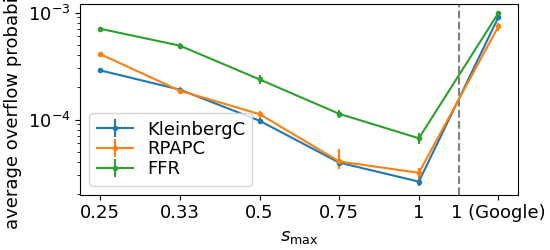}\caption{$\alpha=0.001$}
  \end{subfigure}%

  \caption{Average bin overflow probability for \emph{Uniform} (the first 5 columns) and \emph{Google} (last column) instances, by the original (left) and combined (right) algorithms. Each point is a median of 10 instances. Bars show a minimum and maximum of 10 instances. }\label{fig:average_overflow}
\end{figure*}

\berpoilemma*
\begin{proof}
  For $t \ge 1$ we have $F_{X}(t) = 1 > F_{Y}(t)$, for $t = 0$:
  \[F_{X}(0) = 1-p = e^{-\ln\left(\frac{1}{1-p}\right)} \ge e^{-\lambda} = F_{Y}(0).\]
\end{proof}

\sumtwolemma*
\begin{proof}
  \begin{equation*}
    \begin{split}
      F_{X_{1}+X_{2}}(t) &= \sum_{x} \P(X_{1} = x)\P(X_{2} \le t - x) \ge \\
      &\ge \sum_{x_{1}}\sum_{x_{2} \ge x_{1}} \P(X_{1} = x_{1})\P(Y_{2} = t - x_{2}) \text{.}
    \end{split}
  \end{equation*}
  This sum has countably many nonzero components, and all of them are positive, so it is absolutely convergent, and we can change the order of summation to get:
  \begin{equation*}
    \begin{split}
      F_{X_{1}+X_{2}}(t) &\geq \sum_{x_{2}} \P(Y_{2} = t - x_{2}) \sum_{x_{1} \le x_{2}} \P(X_{1} = x_{1}) \ge \\
      &\ge \sum_{x}\P(Y_{2} = t - x)F_{Y_{1}}(x) = F_{Y_{1}+Y_{2}}(t) \text{.}
    \end{split}
  \end{equation*}
\end{proof}

\stochmajorizationlem*
\begin{proof}
  From Lemma~\ref{lem_bern_poiss} $\forall_{i}\forall_{t} F_{X_i}(t) \geq F_{P_{i}}(t)$, so from Lemma~\ref{lem_cdf_sum} $\forall_{t} F_{B}(t) \geq F_{P}(t)$ and
  \[\P(B > 1) = 1 - F_{B}(1) \le 1 - F_{P}(1) = \P(P > 1)\text{.}\]
\end{proof}

\invgammaineqlem*
\begin{proof}
From the facts that $Q(k, Q^{-1}(k, \beta)) = \beta$ and that $Q(s, x)$ is decreasing w.r.t. $x$, we get
\[Q^{-1}(k, \beta) = \max\{\lambda > 0:\ Q(k, \lambda) \ge \beta\}\]
so we can rewrite the above inequality as
\[\max\{\lambda > 0:\ Q(k, \lambda) \ge \beta\} \le \frac{k}{k+1}\max\{\lambda > 0:\ Q(k+1, \lambda) \ge
  \beta\}.\]
By substituting $\lambda \to \frac{k+1}{k}\lambda$ on the right side of the inequality we get
\[\max\{\lambda > 0:\ Q(k, \lambda) \ge \beta\} \le \max\left\{\lambda > 0:\ Q\left(k+1, \frac{k+1}{k}\lambda\right) \ge \beta\right\}\]
so it is enough to show that for
\[\lambda \in \left\{Q^{-1}(k, \beta):\ \beta \in \cointerval{\frac{1}{2}, 1}\right\} \subseteq \ocinterval{0, Q^{-1}\left(k, \frac{1}{2}\right)}\]
we have
\[Q(k,\lambda) \le Q\left(k+1, \frac{k+1}{k}\lambda\right).\]
By rewriting the above inequality using the definition $Q(s, x) = \frac{\Gamma(s,x)}{\Gamma(s)}$, and the equation
\[\Gamma(s+1, x) = s\Gamma(s, x) + x^{s}e^{-x}\]
we get
\[\frac{\Gamma(k, \lambda)}{\Gamma(k)} \le \frac{k\Gamma(k, \frac{k+1}{k}\lambda) + {\left(\frac{k+1}{k}\lambda\right)}^{k} e^{-\frac{k+1}{k}\lambda} }{\Gamma(k+1)}\]
Now, using the definition of $\Gamma(s,x)$ and the equation $\Gamma(k) = (k-1)!$ we arrive at
\[\int_{\lambda}^{\frac{k+1}{k}\lambda}t^{k-1}e^{-t}dt \le \frac{1}{k}{\left(\frac{k+1}{k}\lambda\right)}^{k} e^{-\frac{k+1}{k}\lambda}\]
Let us notice that if the function $t^{k-1}e^{-t}$ is increasing on the interval $(\lambda, \frac{k+1}{k}\lambda)$, then the proof is finished:
\begin{equation*}
  \begin{aligned}
    \int_{\lambda}^{\frac{k+1}{k}\lambda}t^{k-1}e^{-t}dt &\le \frac{\lambda}{k}{\left(\frac{k+1}{k}\lambda\right)}^{k-1} e^{-\frac{k+1}{k}\lambda} \le \\
    &\le \frac{k+1}{k}\frac{\lambda}{k}{\left(\frac{k+1}{k}\lambda\right)}^{k-1} e^{-\frac{k+1}{k}\lambda}.
  \end{aligned}
\end{equation*}
The derivative $\frac{\partial}{\partial t}(t^{k-1}e^{-t}) = -e^{-t}t^{k-2}(t - k + 1)$ is non-negative for $t \le k - 1$, so we need $\frac{k+1}{k}\lambda \le k - 1$, thus it is enough to show that $Q^{-1}(k, \frac{1}{2}) \le \frac{k(k-1)}{k+1}$, or equivalently $Q(k, \frac{k(k-1)}{k+1}) \ge \frac{1}{2}$, but
\[Q(k, \frac{k(k-1)}{k+1}) \ge Q(k, k-1) = \P(\Poi(k-1) \le k-1)\]
so the thesis follows from the fact that for $n \in \N$ median of $\Poi(n)$ equals $n$~\cite{Adell_2005}.
\end{proof}

\counterexamplelem*
\begin{proof}
  Let us fix $n = c\alpha^{-1/2}$ for some $c > 0$. From the Lemmas~\ref{lem_bern_poiss},~\ref{lem_cdf_sum} we have:
  \begin{equation*}
  \P\left(\sum_{i=1}^{n}X_{i} \le 1\right) \ge \P\left(\Poi\left(n\ln\left(\frac{1}{1 - 2\alpha}\right)\right) \le 1\right) = Q\left(2, c\alpha^{-1/2}\ln\left(\frac{1}{1 - 2\alpha}\right)\right).
  \end{equation*}
  As
  $\ln\left(\frac{1}{1 - 2\alpha}\right) \le 4\alpha$ for $\alpha \le \frac{1}{4}$, $\P\left(\sum_{i=1}^{n}X_{i} \le 1\right) \ge Q(2, 4c\sqrt{\alpha})$.
  Applying $Q^{-1}(2,\cdot)$ to the inequality $Q(2, 4c\sqrt{\alpha}) \ge 1 - \alpha$, we arrive at the inequality $4c\sqrt{\alpha} \le Q^{-1}(2, 1-\alpha)$, and the existence of such constant $c$ follows from the asymptotic expansion of $Q^{-1}$ near 1 in~\cite{wolfram}.

\end{proof}

\end{document}